\documentclass[smallextended]{llncs}

\usepackage{hyperref}
\usepackage{amsthm}
\theoremstyle{plain}
\usepackage{amsfonts}
\usepackage{amssymb}
\usepackage{amsmath}
\usepackage{graphicx}
\usepackage{psfrag}
\usepackage{xcolor}
\usepackage{enumerate}
\usepackage{subcaption}
\usepackage{mathtools}
\usepackage{tikz-cd}
\usepackage[title]{appendix}
\usetikzlibrary{arrows}
\usepackage{booktabs}


\usepackage[all]{xy}

\newtheorem*{theorem*}{Theorem}
\newtheorem*{proposition*}{Proposition}
\newtheorem*{lemma*}{Lemma}

\setcounter{secnumdepth}{3}

\tolerance=10000

\newcommand{\modified}[1]{\textcolor{black}{#1}}
\newcommand{\inserted}[1]{\textcolor{black}{#1}}

\begin{document}

\title{Steady and ranging sets in graph persistence}

\titlerunning{Steady and ranging}  
%
\author{Mattia G. Bergomi$^1$, Massimo Ferri$^2$, Antonella Tavaglione$^2$}
\authorrunning{M.G. Bergomi, M. Ferri, A. Tavaglione} 
%
%
\institute{$^1$ Independent Researcher, Milan, Italy\\
$^2$ ARCES and Dept. of Mathematics, Univ. of Bologna, Italy
\email{mattiagbergomi@gmail.com, massimo.ferri@unibo.it, antonella.tavaglione@studio.unibo.it},
}

\maketitle              

\begin{abstract}

Topological data analysis can provide insight on the structure of weighted graphs and digraphs. However, some properties underlying a given (di)graph are hardly mappable to simplicial complexes. We introduce \textit{steady} and \textit{ranging} sets: two standardized ways of producing persistence diagrams directly from graph-theoretical features. The two constructions are framed in the context of \textit{indexing-aware persistence functions}. Furthermore, we introduce a sufficient condition for stability. Finally, we apply the steady- and ranging-based persistence constructions to toy examples and real-world applications.

\end{abstract}

\begin{keywords}
Persistence, weighted graph, weighted digraph, hub, network.
\end{keywords}

\section{Introduction}

Weighted graphs are a common data structure in many real-world scenarios. Recently, persistent homology became a widespread tool for data analysis, classification, comparison, and retrieval. However, this technique is by its very own nature limited to the analysis of weighted simplicial complexes. Although a graph is a one-dimensional complex, relevant information is not always carried by its topology, but, for instance, by graph-theoretical structures. A common choice to overcome this issue is to associate auxiliary simplicial complexes to the graph, see for instance~\cite{bergomi2020topological}. This strategy has been successfully applied in many interesting applications, e.g.~\cite{PeEx*14,LoEx*16,ReNo*17,rieck2018clique,SiGi*18,chowdhury2018persistent,port2018persistent,blevins2020reorderability,vijay2020weighted}.

It is possible to define and compute persistence in other categories than simplicial complexes or topological spaces~\cite{bergomi2019rank,bergomi2021beyond} and, in a different sense, \cite{patel2018generalized,mccleary2020bottleneck,kim2021generalized,mccleary2020edit,govc2021complexes}. We introduce a further class of indexing-aware persistence functions ({\it ip-functions}), defined on $(\mathbb{R}, \le)$-indexed diagrams in a given category, that can be described via persistence diagrams. Additionally, we display a specific way of building ip-functions for filtered graphs and digraphs, introducing the concepts of {\it steady} and {\it ranging} sets.

We are rather far from the categorifications of \cite{BuSc14,Les15,oudot2015persistence,de2018theory,govc2021complexes}: we aim to provide a simple and agile tool that can be applied directly to graphs (i.e., without mapping graphs to simplicial complexes), and possibly to other structures arising naturally from applications. \modified{The constructions derived from the framework we propose have a topological counterpart obtainable considering the simplicial complex associated with a poset (see \cite[Rem. 1]{bergomi2021beyond}). Here, we show how to bypass that topological construction.}

\modified{Section~\ref{sec:perdia} briefly recalls the classical notions of persistence diagram and bottleneck distance. Section~\ref{sec:grape} focuses on graphs. First, we define {\it ip-functions}, and {\it balanced} ip-functions and discuss their stability. Then, we introduce {\it steady} and {\it ranging sets} as swift generators of ip-functions based directly on graph-theoretical features. These constructions are the theoretical core of the work. Thereafter, we apply them to study \textit{persistent} Eulerian sets and monotone features on some elementary graphs. Section~\ref{sec:hubs} showcases how the steady and ranging constructions can be leveraged in hub-detection tasks. Concrete applications follow in Section~\ref{sec:concrete}: we compute steady and ranging hubs in a network of airports, the character co-occurrence networks of {\it Les Mis\'erables} and {\it Game of Thrones}, and a set of languages. Section~\ref{sec:digraph} extends to weighted digraphs the theory developed in the previous sections. Code for application is available as a Python package at the repository \href{https://github.com/MGBergomi/hubpersistence.git}{https://github.com/MGBergomi/hubpersistence.git}.} The Appendix contains examples showing that most ip-functions of the paper are not balanced.

\subsection{Persistence diagrams} \label{sec:perdia}

The main object of study in persistent homology \cite{EdHa08} are filtered spaces, i.e. pairs $(X, f)$ where $X$ is a topological space (e.g., the space of a simplicial complex) and $f:X \to \mathbb{R}$ is a map called {\it filtering function}: sublevel sets $X_u= f^{-1}\big((-\infty, u]\big)$ are compared through homology morphisms induced by inclusion, in particular the so-called Persistent Betti Number functions. From such a function a {\it persistence diagram} (see Def.~\ref{def:perdia}) can be built \cite[Sect. 2]{CoEdHa07}. In turn, Persistent Betti Number functions can be recovered from the persistence diagram, \cite{CoEdHa07}.

Persistence diagrams are the most widely used ``fingerprints'' of filtered spaces. The {\it bottleneck distance} between persistence diagrams yields an effective lower bound to distances between filtered spaces. This makes persistence diagrams a powerful tool in shape classification, analysis and retrieval.
The strategic advantage of the generalisation started in \cite{bergomi2019rank,bergomi2021beyond} consists in the fact that also categorical persistence functions (Def.~\ref{def:persistence}) can be represented by persistence diagrams: see \cite[Sec. 3.9]{bergomi2019rank}.

In $\mathbb{R}\times(\mathbb{R}\cup\{+\infty\})$ set $\Delta=\{(u, v) \, | \, u=v\}$, $\Delta^+=\{(u,v) \, | \, u<v \}$ and $\bar{\Delta}^+ = \Delta \cup \Delta^+$. In a multiset, the {\it multiplicity} of an element will be the number of times that the element appears.

\begin{definition}\cite{CoEdHa07,ChaCo*09} \label{def:perdia}
A \emph{persistence diagram} $D$ is a multiset of points of $\bar{\Delta}^+$ where every point of the \emph{diagonal} $\Delta$ appears with infinite multiplicity.
\end{definition}

The points of $D$ belonging to $\Delta^+$ are called {\it cornerpoints}; they are said to be {\it proper} if both their coordinates are finite, {\it cornerpoints at infinity} otherwise. A persistence diagram is said to be {\it finite} if so is its set of cornerpoints. We shall only consider finite persistence diagrams.

\begin{definition}\label{def:bottleneck}
Given persistence diagrams $D, D'$, let $\Gamma$ be the set of all bijections between $D$ and $D'$. We define the \emph{bottleneck} (formerly \emph{matching}) \emph{distance} as the real number $$d(D, D') = \inf_{\gamma\in\Gamma} \sup_{p\in D} \| p-\gamma(p)\|_\infty$$
\end{definition}

First, this distance function checks the maximum displacement between corresponding points for a given matching either between cornerpoints of the two diagrams or cornerpoints and their projections on the diagonal $\Delta$. Then, the minimum among these maxima is computed. Minima and maxima are actually attained because of the requested finiteness.

\section{Graph-theoretical persistence}\label{sec:grape}

Let {\bf Graph} be the category having finite simple undirected graphs as objects and injective simplicial applications as morphisms, seen as a subcategory of the category of finite simplicial complexes. In what follows, a graph will be considered as the pair of its vertex set and edge set, i.e. $G=(V, E)$, $G'=(V', E')$ and so on.

\modified{
\begin{definition}\cite[Sect. 1.3]{BuSc14}\label{def:indexed}
   An \emph{$(\mathbb{R}, \le)$-indexed diagram} is any functor from the category $(\mathbb{R}, \le)$ to an arbitrary category $\mathbf{C}$. $(\mathbb{R}, \le)$-indexed diagrams form a category, $\mathbf{C}^{(\mathbb{R}, \le)}$. The $(\mathbb{R}, \le)$-indexed diagram is said to be \emph{monic} if all morphisms of its image are monomorphisms of $\mathbf{C}$.
\end{definition}
}

We consider $(\mathbb{R}, \le)$-indexed diagrams in {\bf Graph} that are constant on a finite set of left-closed, right-open intervals. Because of the choice of monomorphisms as the only acceptable morphisms, every such $(\mathbb{R}, \le)$-indexed diagram is monic, \modified{see Def.~\ref{def:indexed},}  and can be seen, up to natural isomorphisms, as a filtration of a graph $G$ coming from a filtering function $f:V\cup E \to \mathbb{R}\cup \{+\infty\}$. Moreover, we shall limit our study to $(\mathbb{R}, \le)$-indexed diagrams whose associated filtration has no isolated vertices at any level. In other words, the filtering function $f$ takes value $+\infty$ if a vertex is isolated, and the minimum of its values on the edges incident to the vertex, otherwise. Thus, $f$ is determined by its restriction to $E$; therefore the {\it weighted graphs} considered here are pairs $(G, f)$ with $f:E \to \mathbb{R}$\modified{. By construction, the subgraphs of the corresponding filtrations are induced by their edge sets}.

\begin{figure}[htb]
\centering
\includegraphics[width =0.75 \textwidth]{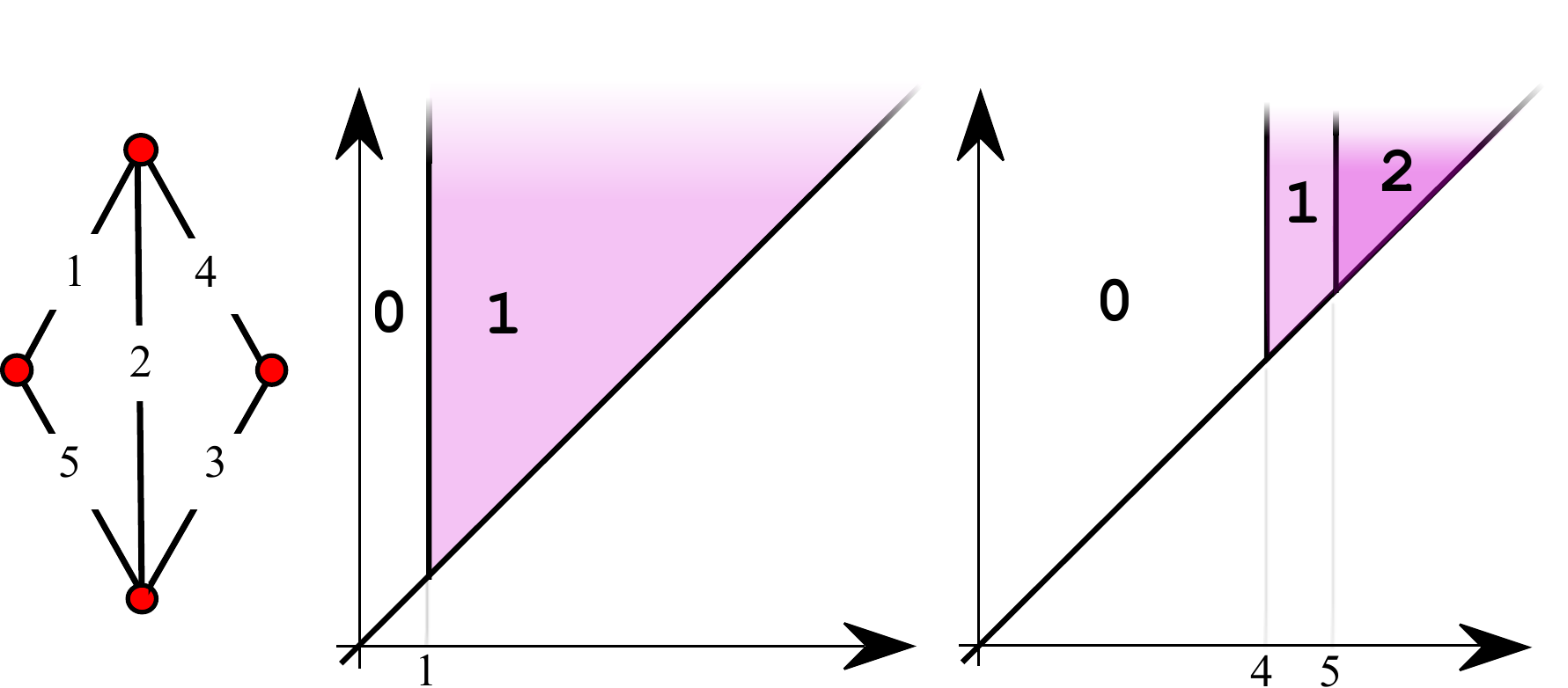}
\caption{A weighted graph (left) and its Persistent Betti Number functions in degree 0 (middle) and 1 (right).
}\label{fig:theta}
\end{figure}

\modified{
\begin{definition}\cite[Def. 3.2]{bergomi2019rank}
    \label{def:persistence}
    Let $\mathbf{\bar{C}}$ be a category. A lower-bounded function $p:\textnormal{Morph}(\mathbf{\bar{C}}) \rightarrow\mathbb{Z}$ is a \emph{categorical persistence function} if, for all $u_1 \rightarrow u_2 \rightarrow v_1 \rightarrow v_2$, the following inequalities hold:
\begin{enumerate}
    \item $p(u_1\rightarrow v_1) \le p(u_2\rightarrow v_1)$ and $p(u_2\rightarrow v_2) \le p(u_2\rightarrow v_1)$.
    \item $p(u_2\rightarrow v_1) - p(u_1\rightarrow v_1) \ge p(u_2\rightarrow v_2) - p(u_1\rightarrow v_2)$.
\end{enumerate}
\end{definition}
}

\begin{remark}\label{rem:vect}
Such a function is categorical in the sense that it yields the same result to morphisms obtained from each other by composition with a $\mathbf{\bar{C}}$-isomorphism.
For instance, we can retrieve the framework of classical topological persistence by setting $\mathbf{\bar{C}}= \mathbf{Vect}$ and $p$ as the rank operator, i.e. the dimension of the image.
\end{remark}

In what follows we focus on $\mathbf{\bar{C}}=(\mathbb{R}, \le)$. In this case a morphism $u\rightarrow v$ is simply the relation $u\le v$, which is represented as the point $(u,v)$ in the persistence diagrams.

\modified{
\begin{definition}\label{def:genpersfct}
Let $p$ be a map assigning to each monic $(\mathbb{R}, \le)$-indexed diagram $M$ in a category $\mathbf{C}$ a categorical persistence function $p_M$ on $(\mathbb{R}, \le)$, such that $p_{M} = p_{M'}$ whenever a natural isomorphism between $M$ and $M'$ exists. All the resulting categorical persistence functions $p_M$ are called \emph{indexing-aware persistence functions in} $\mathbf{C}$ (\emph{ip-functions} for brevity). The map $p$ itself is called an \emph{ip-function generator}.
\end{definition}
}

\begin{remark}\label{rem:generators}
An ip-function generator is actually a categorical function (in the sense of Rem.~\ref{rem:vect}) on the functor category $\mathbf{C}^{(\mathbb{R}, \le)}$ .
\end{remark}

\modified{An ip-function in {\bf Graph} (Def.~\ref{def:genpersfct}) $p_M$, where $M$ is an  $(\mathbb{R}, \le)$-indexed diagram, will be denoted $p_{(G, f)}$, where $M$ corresponds to the filtration produced by the weighted graph $(G, f)$. The associated persistence diagram will be denoted by $D(f)$, for the sake of simplicity and if no confusion may occur.}

\modified{We can now observe that ip-functions are a particular case of categorical persistence functions in the category {\bf Graph}. We recall that categorical persistence functions generalise Persistent Betti Number (PBN) functions. The difference between any of the categorical persistence functions introduced in \cite{bergomi2021beyond} and an ip-function defined here is that the former comes from a functor defined on {\bf Graph}, while the latter strictly depends on the filtration, so comes from a functor defined on $(\mathbb{R}, \le)$.}

\modified{
    \begin{remark}\label{rem:running_ex}
        The graph depicted in Fig.~\ref{fig:theta} shall be our running toy example along the entire manuscript. In the figure, we report the PBN functions of degree $0$ and $1$ to allow the reader to compare those classical results with the ones we shall obtain through ip-functions.
    \end{remark}
}

\modified{
In Section~\ref{sec:digraph}, we extend the notions introduced above to the category of directed graphs.}

\subsection{Balanced ip-functions}\label{sec:balfct}

The categorical functions introduced in \cite{bergomi2021beyond} are stable, i.e. the bottleneck distance between their persistence diagrams is a lower bound for their interleaving distance. The same does not automatically hold for ip-functions.
However, we shall state a condition (Def.~\ref{def:balfct}) which implies stability (as proved in Thm.~\ref{thm:stab}). This condition corresponds to \cite[Prop. 10]{dAFrLa10}: there, it is proved for 0-degree PBNs, and from it the stability theorem \cite[Thm. 29]{dAFrLa10} follows through a sequence of lemmas; here, it is postulated.

\begin{definition}\label{def:balfct}
Let $p$ be a ip-function generator on {\bf Graph}. The map $p$ itself and the resulting ip-functions are said to be \emph{balanced} if the following condition is satisfied. Let $(G, f)$ and $(G', f')$ be any two weighted graphs, and $p_{(G, f)}$, $p_{(G', f')}$ their associated ip-functions. If an isomorphism $\psi:G\to G'$ \modified{and a positive real number $h$ exists, such that $\sup_{e\in E} |f(e)-f'\big(\psi(e)\big)|\le h$}, then for all $(u, v)\in \Delta^+$ the inequality $p_{(G, f)}(u-h, v+h)\le p_{(G', f')}(u, v)$ holds.
\end{definition}

Let $(G, f)$, $(G', f')$ be as above. Let also $\mathcal{H}$ be the (possibly empty) set of graph isomorphisms between $G$ and $G'$. We can now take to {\bf Graph} some definitions given in \cite{FrMu99,dAFrLa10,Les15}.

\begin{definition}\label{def:natural}
The \emph{natural pseudodistance} of $(G, f)$ and $(G', f')$ is
$$\delta\big((G, f), (G', f')\big) = \left\{
\begin{array}{l l}
+\infty  & \textnormal{if} \ \ \ \mathcal{H}=\emptyset \\
\inf_{\phi\in \mathcal{H}}\sup_{e\in E} |f(e) - g\big(\phi(e)\big)| \ \ & \textnormal{otherwise}
\end{array}
\right.$$
\end{definition}

Some simple adjustments of the proof of \cite[Thm. 29]{dAFrLa10} and of its preceding lemmas yield the following theorem.

\begin{theorem}[Stability]\label{thm:stab}
Let $p$ be a balanced ip-function generator in {\bf Graph} and $(G, f), (G', f')$ be two weighted graphs. Then we have
\[d\big(D(f), D(f')\big) \le \delta\big((G, f), (G', f')\big),\]
where $D(f)$ and $D(f')$ are the persistence diagrams realized by the ip-functions $p_{(G, f)}$ and $p_{(G', f')}$ respectively.
$\square$
\end{theorem}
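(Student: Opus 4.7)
The plan is to mirror the argument of \cite[Thm.~29]{dAFrLa10} step by step, with the balanced condition of Def.~\ref{def:balfct} playing the role that \cite[Prop.~10]{dAFrLa10} played in the classical setting. First I would dispose of the trivial case $\mathcal{H}=\emptyset$, in which $\delta=+\infty$ and the bound holds vacuously; from now on assume $\delta<+\infty$. For an arbitrary $\epsilon>0$, the definition of $\delta$ as an infimum supplies an isomorphism $\psi\in\mathcal{H}$ such that $h:=\delta+\epsilon$ satisfies the hypothesis of Def.~\ref{def:balfct}. Applying the balanced condition once to $\psi$ and once to $\psi^{-1}$ yields the pair of symmetric inequalities
\[
  p_{(G,f)}(u-h,v+h)\le p_{(G',f')}(u,v),\qquad p_{(G',f')}(u-h,v+h)\le p_{(G,f)}(u,v)
\]
for every $(u,v)\in\Delta^+$. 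This is an $h$-interleaving at the level of the ip-functions.

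Next I would pass from this interleaving to a statement about cornerpoint multiplicities, using the representation of a categorical persistence function through its cornerpoints recalled in \cite[Sec.~3.9]{bergomi2019rank}. The multiplicity of a proper cornerpoint is recovered as the signed sum of values of $p$ at the four corners of a small open rectangle containing it (with analogous one-sided expressions for cornerpoints at infinity); substituting the two interleaving inequalities into this signed sum bounds the total multiplicity of $D(f)$ inside any open rectangle $R\subset\Delta^+$ by the total multiplicity of $D(f')$ in the $h$-dilation of $R$, and symmetrically. These play the role of the counting lemmas preceding \cite[Thm.~29]{dAFrLa10}.

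Finally, I would assemble the bijection $\gamma$ realizing the bottleneck bound. Every cornerpoint of $D(f)$ lying at $\ell^\infty$-distance greater than $h$ from $\Delta$ admits, by the multiplicity bound, a partner in $D(f')$ within $\ell^\infty$-distance $h$; remaining cornerpoints, and any unmatched points of $D(f')$, are paired with their projections on $\Delta$. A Hall-type finiteness argument, available because we restricted to finite persistence diagrams, promotes these local pairings to a global bijection $\gamma$ with $\sup_{q}\|q-\gamma(q)\|_\infty\le h$, whence $d\bigl(D(f),D(f')\bigr)\le \delta+\epsilon$. Letting $\epsilon\to 0$ concludes.

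The main obstacle I expect is the middle step: verifying that the ``cornerpoint equals signed sum of $p$'' representation, which in the classical proof relies on continuity properties of $0$-degree PBNs and on the right-continuity of sublevel-set filtrations, transfers without loss to an arbitrary balanced ip-function generator over \textbf{Graph}. Once that representation is established, the rest of the argument is a direct transcription of \cite{dAFrLa10}, with continuous filtering functions on a topological space replaced by edge-weight maps $f:E\to\mathbb{R}$ and with the $0$-degree PBN replaced by the generic $p_{(G,f)}$.
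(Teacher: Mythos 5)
Your proposal is correct and follows exactly the route the paper itself takes: the paper's proof is literally the one-line remark that ``simple adjustments of the proof of \cite[Thm.~29]{dAFrLa10} and of its preceding lemmas'' give the result, with the balanced condition substituted for \cite[Prop.~10]{dAFrLa10}, which is precisely the scaffolding you reconstruct (trivial infinite case, $\epsilon$-approximate isomorphism, symmetric $h$-interleaving from Def.~\ref{def:balfct} applied to $\psi$ and $\psi^{-1}$, cornerpoint-multiplicity counting via the representation of \cite[Sec.~3.9]{bergomi2019rank}, matching construction, $\epsilon\to 0$). You in fact supply more detail than the paper does, and the obstacle you flag---transferring the signed-sum cornerpoint representation to general balanced ip-functions---is exactly the content hidden in the paper's appeal to \cite[Sec.~3.9]{bergomi2019rank}.
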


Through \cite[Thm. 5.8]{FrLaMe19}, this also implies stability with respect to the interleaving distance.  
Universality \cite[Sec. 5.2]{Les15} is generally not granted for stable persistence functions: it needs {\it ad hoc} constructions.

\modified{
    When discussing stability above, we introduced two distinct graphs. However, the following proposition describes stability when considering a single graph and two filtering functions. This result will be useful in the remainder of the paper.
\begin{proposition}\label{same}
The ip-function generator $p$ is balanced if and only if the following condition is satisfied.
Let $G=(V, E)$ be any graph, $f$ and $g$ be two filtering functions on $G$, and $p_{G, f)}$ and $p_{(G, g)}$ their ip-functions. If a positive real number $h$ exists, such that $\sup_{e\in E}|f(e)-g(e)|\le h$, then for all $(u, v)\in \Delta^+$ the inequality $p_{(G,f)}(u-h, v+h) \le p_{(G, g)}(u, v)$ holds.
\end{proposition}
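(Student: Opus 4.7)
The plan is to prove the two directions of the equivalence separately. The forward implication (balanced $\Rightarrow$ single-graph condition) is immediate: in Definition~\ref{def:balfct}, take $G'=G$, $f'=g$, and $\psi = \mathrm{id}_G$; the hypothesis $\sup_{e\in E}|f(e)-f'(\psi(e))| \le h$ then becomes exactly $\sup_{e\in E}|f(e)-g(e)|\le h$, and the conclusion matches the one in the proposition.

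For the reverse implication, assume the single-graph condition. Given two weighted graphs $(G, f)$ and $(G', f')$ together with an isomorphism $\psi : G \to G'$ with $\sup_{e\in E}|f(e)-f'(\psi(e))|\le h$, the key idea is to transport $f'$ back along $\psi$. Define $g : E \to \mathbb{R}$ by $g(e) = f'(\psi(e))$, so that $(G, g)$ is a weighted graph on the same vertex-edge data as $(G, f)$. Then $\psi$ induces a natural isomorphism between the $(\mathbb{R}, \le)$-indexed diagrams associated with $(G, g)$ and $(G', f')$: for every level $u \in \mathbb{R}$ it sends the sublevel subgraph $\{e : g(e) \le u\}$ isomorphically onto $\{e' : f'(e') \le u\}$, and these isomorphisms commute with the inclusions. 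By the invariance clause built into Definition~\ref{def:genpersfct} ($p_M = p_{M'}$ whenever $M$ and $M'$ are naturally isomorphic), we conclude $p_{(G, g)} = p_{(G', f')}$.

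Now the two weighted graphs $(G, f)$ and $(G, g)$ live on the same underlying graph and satisfy $\sup_{e\in E}|f(e)-g(e)|=\sup_{e\in E}|f(e)-f'(\psi(e))| \le h$. Applying the assumed single-graph condition yields, for every $(u, v) \in \Delta^+$,
\[
p_{(G, f)}(u-h, v+h) \;\le\; p_{(G, g)}(u, v) \;=\; p_{(G', f')}(u, v),
\]
which is precisely the balanced condition of Definition~\ref{def:balfct}.

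The only conceptual step is recognizing that pulling back $f'$ through the isomorphism $\psi$ produces a weighted graph on $G$ whose ip-function coincides with that of $(G', f')$; this is a direct consequence of the definition of ip-function generator, so no substantial obstacle arises. Once this identification is in place, the single-graph inequality can be applied verbatim and the proof is complete.
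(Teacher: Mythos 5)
Your proposal is correct and follows exactly the paper's own argument: the forward direction via $\psi=\mathrm{id}_G$, and the reverse direction by pulling back $f'$ to $g=f'\circ\psi$ and invoking the natural-isomorphism invariance of the ip-function generator to get $p_{(G,g)}=p_{(G',f')}$. You simply spell out the details that the paper leaves implicit.
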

\begin{proof}
One of the two implications is immediate. The other is proved by the fact that $p_{(G, g)} = p_{(G', f')}$ where $g = f'\circ \psi$, with the notation of Def.~\ref{def:balfct}.
\end{proof}
\begin{remark}
The condition is symmetric: if it holds as in the statement of Prop.~\ref{same}, then also  $p_{(G, g)}(u-h, v+h) \le p_{(G, f)}(u, v)$ holds for all $(u, v)\in \Delta^+$.
\end{remark}
}

\subsection{Steady and ranging sets}\label{sec:sr}

\begin{definition}\label{def:feature}
Given a graph $G = (V, E)$, any function $\mathcal{F}:2^{V\cup E} \to \{true, false\}$ is called a \emph{feature}. We call $\mathcal{F}$-set any $X\subset V\cup E$ such that $\mathcal{F}(X)= true$. Given a weighted graph $(G, f)$ and a  real number $u$, we denote by $G_u$ the subgraph of $G$ induced by the edge set $f^{-1}(-\infty, u]$. We shall say that $X\subset V\cup E$ is an \emph{$\mathcal{F}$-set at level} $w\in \mathbb{R}$ if it is an $\mathcal{F}$-set of the subgraph $G_w$.
\end{definition}

\begin{definition}\label{maximal}
Let $\mathcal{F}$ be a feature of $G$. We define the {\em maximal} feature $m\mathcal{F}$ associated with $\mathcal{F}$ as follows: for any $X \subseteq (V \cup E)$, $m\mathcal{F}(X) = true$ if and only if $\mathcal{F}(X)=true$ and there is no $Y \subseteq (V \cup E)$ such that $X \subset Y$ and $\mathcal{F}(Y)=true$.
\end{definition}

\begin{definition}\label{def:sr}
Let $\mathcal{F}$ be a feature. A set $X\subseteq V\cup E$ is a \emph{steady} $ \mathcal{F}$-set (s$\mathcal{F}$-set for brevity) at $(u, v) \in \Delta^+$ if it is an $\mathcal{F}$-set at all levels $w$ with $u\le w \le v$. We call $X$ a \emph{ranging} $\mathcal{F}$-set (r$\mathcal{F}$-set) at $(u, v)$ if there exist levels $w\le u$ and $w'\ge v$ at which it is an $\mathcal{F}$-set.

Let $S^\mathcal{F}_{(G, f)}(u,v)$ be the set of s$\mathcal{F}$-sets at $(u, v)$ and let $R^\mathcal{F}_{(G, f)}(u,v)$ be the set of r$\mathcal{F}$-sets at $(u, v)$.
\end{definition}

\begin{remark}\label{rem:implies} \modified{Intuitively, the adjective ``steady'' stresses that a steady set enjoys a given feature $\mathcal{F}$ throughout the entire interval $[u, v)$.  ``Ranging'', instead, refers to the fact that a ranging set spans, with feature $\mathcal{F}$, the range $[u, v)$ although possibly with gaps.} 
Of course, steady implies ranging. This implication is granted by the ``$\le$'' and ``$\ge$'' signs in the definitions. With strict inequalities the implication fails. There are features for which steady is equivalent to ranging, e.g., features for which a set can be an $\mathcal{F}$-set only in a (possibly unbounded) interval. A simple example is the feature $\mathcal{F}$ which assigns $true$ only to singletons consisting of a vertex of a fixed degree. 
\end{remark}

\begin{lemma}\label{superlemma}
If $u\le u' < v' \le v$, then
\begin{enumerate}
\item $S^\mathcal{F}_{(G, f)}(u,v) \subseteq S^\mathcal{F}_{(G, f)}(u',v')$
\item $R^\mathcal{F}_{(G, f)}(u,v) \subseteq R^\mathcal{F}_{(G, f)}(u',v')$
\end{enumerate}
where the equalities hold if $G_u = G_{u'}$ and $G_v = G_{v'}$. Moreover $S^\mathcal{F}_{(G, f)}(u,v) = \emptyset = R^\mathcal{F}_{(G, f)}(u, v)$ if $G_u =\emptyset$.
\end{lemma}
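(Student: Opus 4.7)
The plan is to unfold the definitions of steady and ranging sets and verify the inclusions directly; no extra machinery is required beyond monotonicity of the filtration $\{G_w\}_{w\in\mathbb{R}}$.

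For the two basic inclusions I would argue straight from Def.~\ref{def:sr}. If $X\in S^\mathcal{F}_{(G,f)}(u,v)$ then $X$ is an $\mathcal{F}$-set at every level in $[u,v]$, and since $[u',v']\subseteq[u,v]$ the membership $X\in S^\mathcal{F}_{(G,f)}(u',v')$ is immediate. For the ranging version, the witnessing levels $w\le u$ and $w'\ge v$ for $X\in R^\mathcal{F}_{(G,f)}(u,v)$ automatically satisfy $w\le u\le u'$ and $w'\ge v\ge v'$, so they also witness $X\in R^\mathcal{F}_{(G,f)}(u',v')$.

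The equality statements rest on one extra observation: since $G_w$ is monotone in $w$, the hypotheses $G_u=G_{u'}$ and $G_v=G_{v'}$ force $G_w=G_u$ for every $w\in[u,u']$ and $G_w=G_v$ for every $w\in[v',v]$, and being an $\mathcal{F}$-set only depends on the subgraph at that level, so the property transfers verbatim between such levels. With this in hand, for the steady equality I would partition $[u,v]$ into $[u,u']$, $[u',v']$ and $[v',v]$ and use constancy of $G_w$ on the two outer intervals to propagate the $\mathcal{F}$-set property from $u'$ and $v'$ to the whole of $[u,v]$. For the ranging equality the plan is identical, with the only twist that the witnessing levels $w_0\le u'$ and $w_1\ge v'$ may themselves land in $[u,u']$ or $[v',v]$, but in that case constancy of $G_w$ on those intervals lets me replace them by $u$ and $v$ respectively.

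The only mildly subtle point is the last assertion: if $G_u=\emptyset$ then monotonicity gives $G_w=\emptyset$ for every $w\le u$, so any candidate $X$ required to be an $\mathcal{F}$-set of such a $G_w$ must itself be empty, both in the steady and in the ranging case. I would close by invoking the convention, implicit in the framework, that the empty set is not an $\mathcal{F}$-set (otherwise the construction would be trivially populated wherever the filtration vanishes), which forces $S^\mathcal{F}_{(G,f)}(u,v)=R^\mathcal{F}_{(G,f)}(u,v)=\emptyset$. I do not foresee a real obstacle: the whole lemma is bookkeeping with the definitions, and the only genuinely conceptual step is recognising that $G_u=G_{u'}$ together with $u\le u'$ forces the filtration to be constant throughout $[u,u']$ (and symmetrically at the upper end).
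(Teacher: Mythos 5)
Your proof is correct and matches the paper's approach: the paper's entire proof is the one-line remark that the lemma follows from the definitions of steady and ranging $\mathcal{F}$-sets, and you have simply written out those routine verifications (monotonicity of the filtration for the inclusions, constancy of $G_w$ on $[u,u']$ and $[v',v]$ for the equalities). Your explicit flagging of the convention that the empty set is not an $\mathcal{F}$-set of the empty graph---needed for the final assertion when $G_u=\emptyset$---is a detail the paper leaves implicit, but it does not change the argument.
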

\begin{proof}
By the definitions themselves of steady and ranging $\mathcal{F}$-set.
\end{proof}

\begin{definition}\label{def:rhosigma}
Let $\mathcal{F}$ be a feature. For any graph $G$, for any filtering function $f:E \to \mathbb{R}$, we define $\sigma^\mathcal{F}_{(G, f)}: \Delta^+ \to \mathbb{Z}$ as  the function which assigns to $(u, v) \in \Delta^+$ the number $|S^\mathcal{F}_{(G, f)}(u,v)|$ and $\varrho^\mathcal{F}_{(G, f)}: \Delta^+ \to \mathbb{Z}$ as  the function which assigns to $(u, v) \in \Delta^+$ the number $|R^\mathcal{F}_{(G, f)}(u,v)|$. We denote by  $\sigma^\mathcal{F}$ and $\varrho^\mathcal{F}$ the maps assigning $\sigma^\mathcal{F}_{(G, f)}$ and $\varrho^\mathcal{F}_{(G, f)}$ respectively to the $(\mathbb{R}, \le)$-indexed diagram corresponding to $(G, f)$.
\end{definition}

\begin{proposition}\label{supergp}
The maps $\sigma^\mathcal{F}$ and $\varrho^\mathcal{F}$ are ip-function generators.
\end{proposition}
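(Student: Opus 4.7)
The plan is to verify both conditions of Definition~\ref{def:genpersfct}: that for every weighted graph $(G,f)$ the function $\sigma^\mathcal{F}_{(G,f)}$ (respectively $\varrho^\mathcal{F}_{(G,f)}$) is a categorical persistence function on $(\mathbb{R},\le)$, and that this assignment factors through natural isomorphism of $(\mathbb{R},\le)$-indexed diagrams. Lower-boundedness is immediate, since both functions take values in $\mathbb{Z}_{\ge 0}$. The monotonicity clause, namely the two inequalities in item 1 of Definition~\ref{def:persistence}, follows at once from Lemma~\ref{superlemma}: if $u_1\le u_2\le v_1\le v_2$, the inclusions $S^\mathcal{F}_{(G,f)}(u_1,v_1)\supseteq S^\mathcal{F}_{(G,f)}(u_2,v_1)\supseteq S^\mathcal{F}_{(G,f)}(u_2,v_2)$ (and the analogous ones for $R^\mathcal{F}_{(G,f)}$) give the required cardinality comparisons.

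The only non-routine point is the ``diamond'' inequality in item 2 of Definition~\ref{def:persistence}. The key observation I would isolate as a preliminary identity is
\[
S^\mathcal{F}_{(G,f)}(u_1,v_2) \;=\; S^\mathcal{F}_{(G,f)}(u_1,v_1)\,\cap\, S^\mathcal{F}_{(G,f)}(u_2,v_2),
\]
valid whenever $u_1\le u_2\le v_1\le v_2$: the inclusion $\subseteq$ is Lemma~\ref{superlemma}, while $\supseteq$ uses that the overlap $u_2\le v_1$ makes $[u_1,v_1]\cup[u_2,v_2]=[u_1,v_2]$, so being an $\mathcal{F}$-set on both subintervals gives one on the whole interval. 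The same identity holds for $R^\mathcal{F}_{(G,f)}$: the $\supseteq$ direction is even simpler, as a witness level $w\le u_1$ for the first set and $w'\ge v_2$ for the second furnish a common witness pair for $R^\mathcal{F}_{(G,f)}(u_1,v_2)$. Combining this identity with the easy union inclusion $S^\mathcal{F}_{(G,f)}(u_1,v_1)\cup S^\mathcal{F}_{(G,f)}(u_2,v_2)\subseteq S^\mathcal{F}_{(G,f)}(u_2,v_1)$ (again from Lemma~\ref{superlemma}), inclusion-exclusion gives
\[
\bigl|S^\mathcal{F}_{(G,f)}(u_2,v_1)\bigr| \;\ge\; \bigl|S^\mathcal{F}_{(G,f)}(u_1,v_1)\bigr| + \bigl|S^\mathcal{F}_{(G,f)}(u_2,v_2)\bigr| - \bigl|S^\mathcal{F}_{(G,f)}(u_1,v_2)\bigr|,
\]
which rearranges to exactly the required inequality for $\sigma^\mathcal{F}_{(G,f)}$, and verbatim for $\varrho^\mathcal{F}_{(G,f)}$.

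Finally, for the naturality clause I would remark that a natural isomorphism between two monic $(\mathbb{R},\le)$-indexed diagrams $M\simeq M'$ in $\mathbf{Graph}$ amounts, after reading off the associated weighted graphs $(G,f)$ and $(G',f')$, to a graph isomorphism $\psi\colon G\to G'$ such that $f=f'\circ\psi$ on edges. Consequently $\psi$ induces a bijection $G_u\to G'_u$ for every $u\in\mathbb{R}$, and since the feature $\mathcal{F}$ is understood to be intrinsic (invariant under graph isomorphism, as is standard for graph-theoretic features), $\psi$ restricts to bijections $S^\mathcal{F}_{(G,f)}(u,v)\to S^\mathcal{F}_{(G',f')}(u,v)$ and $R^\mathcal{F}_{(G,f)}(u,v)\to R^\mathcal{F}_{(G',f')}(u,v)$. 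The cardinalities therefore coincide, so $\sigma^\mathcal{F}_{(G,f)}=\sigma^\mathcal{F}_{(G',f')}$ and $\varrho^\mathcal{F}_{(G,f)}=\varrho^\mathcal{F}_{(G',f')}$.

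The main obstacle I anticipate is not computational but conceptual: spotting the intersection identity $S^\mathcal{F}(u_1,v_2)=S^\mathcal{F}(u_1,v_1)\cap S^\mathcal{F}(u_2,v_2)$, which turns the categorical persistence inequality into a one-line inclusion-exclusion. Once that identity (and its ranging counterpart) is recorded, the remainder of the proof is a routine combination of Lemma~\ref{superlemma} and cardinality arithmetic; the naturality part is bookkeeping, provided the reader agrees that ``feature'' is meant in the isomorphism-invariant sense.
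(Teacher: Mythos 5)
Your proof is correct, and for the substantive part (condition 2 of Definition~\ref{def:persistence}) it reaches the same destination as the paper by a slightly different route. The paper proves the diamond inequality by showing the containment of set differences $S^\mathcal{F}_{(G,f)}(u_2,v_2)\setminus S^\mathcal{F}_{(G,f)}(u_1,v_2)\subseteq S^\mathcal{F}_{(G,f)}(u_2,v_1)\setminus S^\mathcal{F}_{(G,f)}(u_1,v_1)$, arguing that a set failing to be an $\mathcal{F}$-set at some level in $[u_1,u_2]$ fails there regardless of the upper endpoint; you instead isolate the identity $S^\mathcal{F}_{(G,f)}(u_1,v_2)=S^\mathcal{F}_{(G,f)}(u_1,v_1)\cap S^\mathcal{F}_{(G,f)}(u_2,v_2)$ (using $[u_1,v_1]\cup[u_2,v_2]=[u_1,v_2]$, and the witness-splicing argument for the ranging case) and conclude by inclusion--exclusion on the union inside $S^\mathcal{F}_{(G,f)}(u_2,v_1)$. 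The two arguments are essentially equivalent --- your intersection identity is exactly what makes the paper's difference-containment work --- but your packaging is cleaner and avoids the informal ``failing at the same $w$'' phrasing (which in the paper's text even carries what appears to be a typo, referring to s$\mathcal{F}$-sets at $(u_1,v_2)$ where $(u_2,v_2)$ is meant). You also explicitly verify two clauses the paper's proof silently omits: lower-boundedness, and the invariance $p_M=p_{M'}$ under natural isomorphism required by Definition~\ref{def:genpersfct}. The latter genuinely does require, as you note, that the feature $\mathcal{F}$ be assigned to graphs in an isomorphism-invariant way; this hypothesis is implicit in the paper (all features considered there are intrinsic), so flagging it is a legitimate and useful precision rather than a defect of your argument.
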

\begin{proof}
We prove conditions 1 and 2 of Def.~\ref{def:persistence}, recalling that the source category is $(\mathbb{R}, \le)$, so the existence of a morphism $u \to v$ (with $u\ne v$) simply means that $u<v$. Assume $u_1<u_2<v_1<v_2$. Let $(G, f)$ be any weighted graph.
\begin{itemize}
\item{\bf (Condition 1 for $\sigma^\mathcal{F}$)}
By Lemma~\ref{superlemma}, $S^\mathcal{F}_{(G, f)}(u_1, v_1) \subseteq S^\mathcal{F}_{(G, f)}(u_2, v_1)$, so $|S^\mathcal{F}_{(G, f)}(u_1, v_1)| \le |S^\mathcal{F}_{(G, f)}(u_2, v_1)|$. Also $S^\mathcal{F}_{(G, f)}(u_2, v_2) \subseteq S^\mathcal{F}_{(G, f)}(u_2, v_1)$ and $|S^\mathcal{F}_{(G, f)}(u_2, v_2)| \le |S^\mathcal{F}_{(G, f)}(u_2, v_1)|$.
\item{\bf (Condition 2 for $\sigma^\mathcal{F}$)}
By Lemma~\ref{superlemma}, $S^\mathcal{F}_{(G, f)}(u_1, v_1) \subseteq S^\mathcal{F}_{(G, f)}(u_2, v_1)$,  so $|S^\mathcal{F}_{(G, f)}(u_2, v_1)| - |S^\mathcal{F}_{(G, f)}(u_1, v_1)|$ is the number of s$\mathcal{F}$-sets at $(u_2, v_1)$ which fail to be $\mathcal{F}$-sets at some $w$ with $u_1\le w \le u_2$. Analogously for $|S^\mathcal{F}_{(G, f)}(u_2, v_2)| - |S^\mathcal{F}_{(G, f)}(u_1, v_2)|$.
\newline
Now, every s$\mathcal{F}$-set at $(u_1, v_2)$ which fails to be an $\mathcal{F}$-set at $w$ with $u_1\le w \le u_2$ is also an s$\mathcal{F}$-set at $(u_1, v_1)$ failing at the same $w$. So $S^\mathcal{F}_{(G, f)}(u_2, v_1) - S^\mathcal{F}_{(G, f)}(u_1, v_1) \supseteq S^\mathcal{F}_{(G, f)}(u_2, v_2) - S^\mathcal{F}_{(G, f)}(u_1, v_2)$ and $|S^\mathcal{F}_{(G, f)}(u_2, v_1)| - |S^\mathcal{F}_{(G, f)}(u_1, v_1)| \geq |S^\mathcal{F}_{(G, f)}(u_2, v_2)| - |S^\mathcal{F}_{(G, f)}(u_1, v_2)|$.
\item{\bf (Condition 1 for $\varrho^\mathcal{F}$)}
The argument is the same as for $\sigma^\mathcal{F}$.
\item{\bf (Condition 2 for $\varrho^\mathcal{F}$)}
By Lemma~\ref{superlemma}, $R^\mathcal{F}_{(G, f)}(u_1, v_1) \subseteq R^\mathcal{F}_{(G, f)}(u_2, v_1)$,  so $|R^\mathcal{F}_{(G, f)}(u_2, v_1)| - |R^\mathcal{F}_{(G, f)}(u_1, v_1)|$ is the number of r$\mathcal{F}$-sets at $(u_2, v_1)$ which fail to be $\mathcal{F}$-sets at all levels $w$ with $w \le u_1$. Analogously for $|R^\mathcal{F}_{(G, f)}(u_2, v_2)| - |R^\mathcal{F}_{(G, f)}(u_1, v_2)|$.
\newline
Now, every r$\mathcal{F}$-set at $(u_1, v_2)$ which fails to be an $\mathcal{F}$-set at all levels $w$ with $w \le u_1$ is also an r$\mathcal{F}$-set at $(u_1, v_1)$ failing at the same levels $w$. So $R^\mathcal{F}_{(G, f)}(u_2, v_1) - R^\mathcal{F}_{(G, f)}(u_1, v_1) \supseteq R^\mathcal{F}_{(G, f)}(u_2, v_2) - R^\mathcal{F}_{(G, f)}(u_1, v_2)$ and $|R^\mathcal{F}_{(G, f)}(u_2, v_1)| - |R^\mathcal{F}_{(G, f)}(u_1, v_1)| \geq |R^\mathcal{F}_{(G, f)}(u_2, v_2)| - |R^\mathcal{F}_{(G, f)}(u_1, v_2)|$.
\end{itemize}
\end{proof}

The value of both functions $\sigma^\mathcal{F}_{(G, f)}$ and $\varrho^\mathcal{F}_{(G, f)}$ at a point $P$ on a vertical (resp. horizontal) discontinuity line is the same as the value at the points in a right (resp. upper) neighbourhood of $P$

Of course, there are many features which give valid but meaningless ip-functions: the features $\mathcal{F}$ such that, if $X$ is an $\mathcal{F}$-set at level $u$, then it is an $\mathcal{F}$-set also at level $v$ for all $v>u$.

We still don't know which general hypothesis on $\mathcal{F}$ would imply that $\sigma^\mathcal{F}$ or $\varrho^\mathcal{F}$ are balanced ip-function generators (Def.~\ref{def:balfct}). Such features exist: \modified{Section~\ref{sec:monotone} presents a whole class of features giving rise to balanced ip-functions}.

\subsection{Steady and ranging persistence on Eulerian sets}\label{sec:euler}

We now give an example of the framework exposed in Section~\ref{sec:sr}. Given any graph $G$, we define $\mathcal{EU}: 2^{V\cup E} \to \{true, false\}$ to yield $true$ on a set $A$ if and only if $A$ is a set of vertices whose induced subgraph of $G$ is nonempty, Eulerian and maximal with respect to these properties; in that case $A$ is said to be a $\mathcal{EU}$-\emph{set} of $G$. \modified{$\mathcal{EU}$ is then the maximal version of a feature we are not going to deal with.} Let now $(G, f)$ be a weighted graph. We apply Def.~\ref{def:sr} to feature $\mathcal{EU}$.

\begin{definition}
For any real number $w$, the subset $A\subseteq V$ is a \emph{$\mathcal{EU}$-set at level} $w$ if it is a $\mathcal{EU}$-set of the subgraph $G_w$. It is a \emph{steady} $\mathcal{EU}$-set (an s$\mathcal{EU}$-set) at $(u, v) \in \Delta^+$ if it is a $\mathcal{EU}$-set at all levels $w$ with $u\le w \le v$. It is a \emph{ranging} $\mathcal{EU}$-set (an r$\mathcal{EU}$-set)  at $(u, v)$ if there exist levels $w\le u$ and $w'\ge v$ at which it is a $\mathcal{EU}$-set.
\newline
$S^{\mathcal{EU}}_{(G, f)}(u, v)$ and $R^{\mathcal{EU}}_{(G, f)}(u, v)$ are respectively the sets of s$\mathcal{EU}$-sets and of r$\mathcal{EU}$-sets at $(u, v)$. We define $\sigma^\mathcal{EU}_{(G, f)}: \Delta^+ \to \mathbb{R}$ as  the function which assigns to $(u, v) \in \Delta^+$ the number $|S^\mathcal{EU}_{(G, f)}(u,v)|$ and $\varrho^\mathcal{EU}_{(G, f)}: \Delta^+ \to \mathbb{R}$ as  the function which assigns to $(u, v) \in \Delta^+$ the number $|R^\mathcal{EU}_{(G, f)}(u,v)|$.
\newline
We denote by  $\sigma^\mathcal{EU}$ and $\varrho^\mathcal{EU}$ the maps assigning $\sigma^\mathcal{EU}_{(G, f)}$ and $\varrho^\mathcal{EU}_{(G, f)}$ respectively to the $(\mathbb{R}, \le)$-indexed diagram corresponding to $(G, f)$. By Prop.~\ref{supergp}, $\sigma^\mathcal{EU}$ and $\varrho^\mathcal{EU}$ are ip-function generators.
\end{definition}

\begin{figure}[htb]
\centering
\includegraphics[width =0.75 \textwidth]{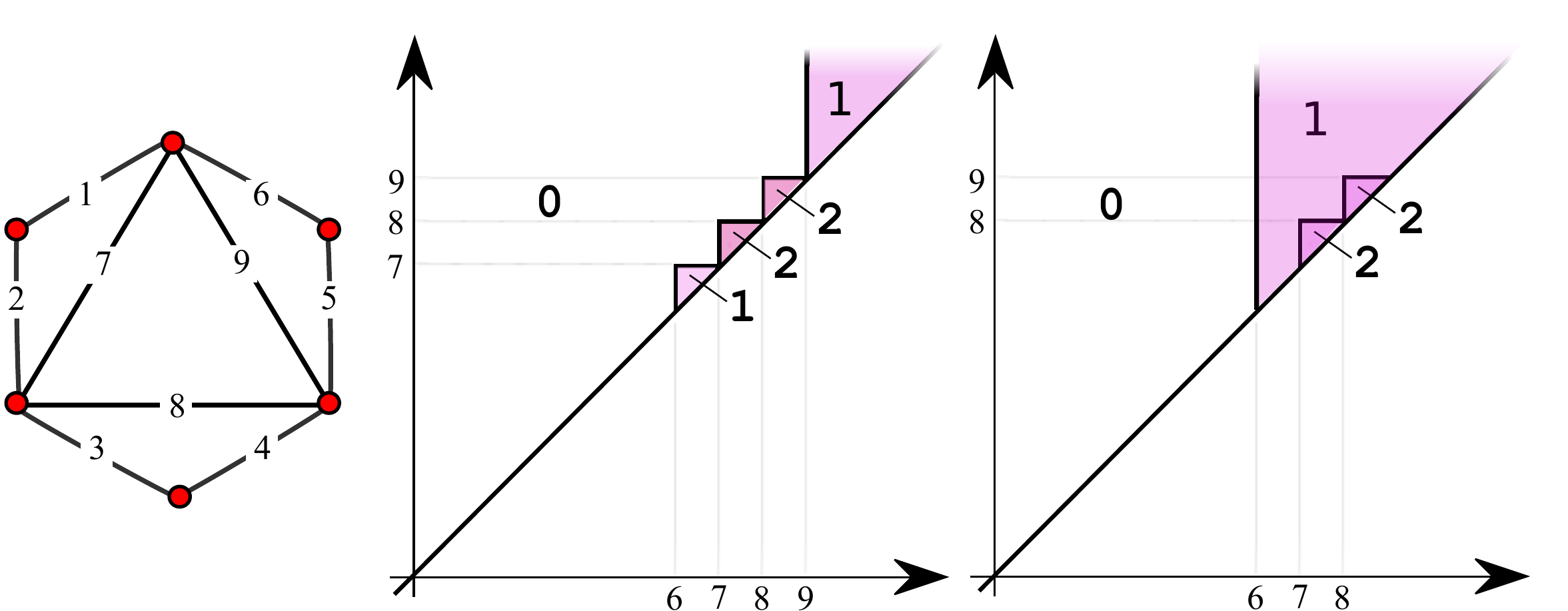}
\caption{A weighted graph $(H, h)$ (left) and the corresponding functions $\sigma^\mathcal{EU}_{(H, h)}$ (middle) and $\varrho^\mathcal{EU}_{(H, h)}$ (right).
}\label{fig:exa}
\end{figure}

\modified{Consider the example displayed in Fig.~\ref{fig:theta}. In that particular example, the functions $\sigma^\mathcal{EU}_{(G, f)}$ and $\varrho^\mathcal{EU}_{(G, f)}$ are the same. Furthermore, they also coincide with the PBN function in degree $1$ shown in the same figure. We show that this is not always the case in Fig.~\ref{fig:exa}.}

Both functions $\sigma^\mathcal{EU}$ and $\varrho^\mathcal{EU}$ are not balanced (see the Appendix).

\inserted{
\subsection{Monotone features}\label{sec:monotone}
For a given graph $G = (V, E)$, we shall consider as subgraphs only the ones induced by sets of edges. The next definition is a variation on the notion of monotone (sometimes dubbed hereditary) property defined in \cite{alon2008every}.
}
\inserted{
\begin{definition}\label{def:monotone}
We say that a feature $\mathcal{F}$ is {\em monotone} if
\begin{itemize}
\item for any graphs $G' =(V', E')\subset G''=(V'', E'')$, and any $X \subseteq (V' \cup E')$, $\mathcal{F}(X) = true$ in $G''$ implies $\mathcal{F}(X)= true$ in $G'$
\item in any graph $\overline{G}=(\overline{V}, \overline{E})$, for any $Y \subset X \subseteq \overline{V} \cup \overline{E}$, $\mathcal{F}(X) = true$ implies $\mathcal{F}(Y)= true$.
\end{itemize}
\end{definition}
}

\inserted{
A paradigmatic monotone feature is independence: independent (or stable) sets and matchings are examples of sets of vertices, respectively of edges, with monotone features.
}

\inserted{
For the remainder of this section, let $(G, f)$ be a weighted graph, $G=(V, E)$, and $\mathcal{F}$ a monotone feature in $G$. By Prop.~\ref{supergp},  $\sigma^\mathcal{F}$ and $\varrho^\mathcal{F}$ are ip-function generators.
}

\inserted{
\begin{lemma}\label{interval}
Let $X \subseteq (V \cup E)$. Then, either there is no value $u$ for which $\mathcal{F}(X)=true$ in $G_u$, or $\mathcal{F}(X)=true$ in $G_u$ for all $u \in [u_1, v_1)$, where $u_1$ is the lowest value $u$ such that in the subgraph $G_u=(V_u, E_u)$ one has $X \subseteq (V_u \cup E_u)$, and $v_1$ is either the lowest value $v$ for which $\mathcal{F}(X)=false$ in $G_v$ or $+\infty$.
\end{lemma}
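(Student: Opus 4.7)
The plan is to reduce the statement to the nontrivial case where some level $u^*$ has $\mathcal{F}(X) = true$ in $G_{u^*}$, and then to use the first clause of Def.~\ref{def:monotone} to spread this property over a whole interval.

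First I would verify that $u_1$ is well defined in this case. For $\mathcal{F}(X)$ to be $true$ in $G_{u^*}$, the set $X$ must be a subset of $V_{u^*}\cup E_{u^*}$, so $\{u : X \subseteq V_u \cup E_u\}$ is nonempty. Since the filtration is nested, this set is upward closed, and by the discreteness of the filtration values its infimum $u_1$ is attained and satisfies $u_1 \le u^*$.

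The key step is the monotone argument. For any $u \in [u_1, u^*]$ one has $G_u \subseteq G_{u^*}$ as subgraphs and $X \subseteq V_u \cup E_u$, so the first clause of Def.~\ref{def:monotone} (downward inheritance to subgraphs) yields $\mathcal{F}(X) = true$ in $G_u$. Hence $\mathcal{F}(X)$ is $true$ throughout $[u_1, u^*]$, and in particular $v_1 > u^*$.

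To close the gap on $(u^*, v_1)$ I would simply unfold the definition of $v_1$: for $u$ in this range, $\mathcal{F}(X)$ is not $false$ in $G_u$ by minimality of $v_1$, and since $u \ge u_1$ forces $X \subseteq V_u \cup E_u$, the value $\mathcal{F}(X)$ is meaningfully defined and thus equals $true$. The one delicate point worth flagging---and the main conceptual obstacle---is the correct interpretation of $v_1$: for $v < u_1$ the value $\mathcal{F}(X)$ is not naturally defined on $G_v$, so $v_1$ must be read as the lowest $v \ge u_1$ at which $\mathcal{F}(X) = false$. With this reading, the lemma is essentially the assertion that $\{u \ge u_1 : \mathcal{F}(X) = true \text{ in } G_u\}$ is a left-closed, right-open interval, which is exactly what the first clause of the monotone definition guarantees; the second clause (closure under subset inclusion) does not seem to be needed here.
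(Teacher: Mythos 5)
Your proof is correct and follows essentially the same route as the paper's: the whole content is the downward propagation of $\mathcal{F}(X)=true$ from a level $u^{*}$ to all lower levels $u\ge u_1$ via the first clause of Def.~\ref{def:monotone}, which is exactly the paper's one-line argument. Your additional remarks---that $u_1$ is attained because the filtration is piecewise constant, that $v_1$ must be read as the least $v\ge u_1$ with $\mathcal{F}(X)=false$, and that the second (subset) clause of monotonicity is not used---are accurate clarifications rather than a different method.
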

\begin{proof}
Assume that $\mathcal{F}(X)=true$ in $G_u$ for at least one value $u$.
If $\mathcal{F}(X)=true$ in $G_u$, then $\mathcal{F}(X)=true$ in $G_{u'}=(V_{u'}, E_{u'})$ for all $u'<u$ such that $X\subseteq (V_{u'} \cup E_{u'})$ by Def.~\ref{def:monotone}.
\end{proof}
The interval $[u_1, v_1)$ of Lemma~\ref{interval}, i.e. the widest interval for which $\mathcal{F}(X)=true$ in $(G, f)$, is called the {\it $\mathcal{F}$-interval of $X$ in $(G,f)$}.
\begin{proposition}\label{sigmaro}
$\sigma^\mathcal{F} =\varrho^\mathcal{F}$
\end{proposition}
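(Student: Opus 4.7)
The plan is to prove the stronger pointwise set equality $S^\mathcal{F}_{(G,f)}(u,v) = R^\mathcal{F}_{(G,f)}(u,v)$ for every weighted graph $(G,f)$ and every $(u,v)\in\Delta^+$; taking cardinalities and then applying the maps $\sigma^\mathcal{F}$, $\varrho^\mathcal{F}$ yields $\sigma^\mathcal{F} = \varrho^\mathcal{F}$ immediately.

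The inclusion $S^\mathcal{F}_{(G,f)}(u,v)\subseteq R^\mathcal{F}_{(G,f)}(u,v)$ needs no monotonicity: it is the implication already observed in Rem.~\ref{rem:implies}, obtained by picking $w=u$ and $w'=v$ as witnesses of the ranging condition.

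For the reverse inclusion, I would take $X\in R^\mathcal{F}_{(G,f)}(u,v)$ and invoke Lemma~\ref{interval}. By the ranging hypothesis there exist $w_0\le u$ and $w_1\ge v$ with $\mathcal{F}(X)=true$ in $G_{w_0}$ and in $G_{w_1}$, so the set of levels at which $\mathcal{F}(X)=true$ is nonempty and hence coincides with the $\mathcal{F}$-interval $[u_1, v_1)$ of $X$ in $(G,f)$. Both $w_0$ and $w_1$ lie in this interval, so $u_1\le w_0\le u$ and $v\le w_1 < v_1$; consequently $[u,v]\subseteq[u_1,v_1)$, which means $\mathcal{F}(X)=true$ in $G_w$ for every $w\in[u,v]$, i.e.\ $X\in S^\mathcal{F}_{(G,f)}(u,v)$.

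The only delicate point is the interplay between the closed interval $[u,v]$ appearing in the definitions of steady and ranging sets and the half-open interval $[u_1, v_1)$ produced by Lemma~\ref{interval}; the crucial fact that saves the right endpoint is that $v_1$ is the least level at which $\mathcal{F}(X)$ becomes $false$, so $\mathcal{F}(X)=true$ at $w_1$ forces the strict inequality $w_1<v_1$, and hence $v\le w_1<v_1$. Aside from this bookkeeping, the argument is a direct consequence of both clauses of the monotonicity condition in Def.~\ref{def:monotone}, already packaged into Lemma~\ref{interval}, so no further technical obstacle is expected.
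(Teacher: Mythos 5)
Your proof is correct and follows exactly the route the paper intends: the paper's own proof is just the citation ``By Lemma~\ref{interval}'', and your argument is the fully written-out version of that, using the interval structure $[u_1,v_1)$ of the set of levels where $\mathcal{F}(X)=true$ to upgrade the ranging witnesses $w_0\le u$ and $w_1\ge v$ to the steady condition on all of $[u,v]$. Your care about the half-open right endpoint ($w_1<v_1$, hence $v\le w_1<v_1$) is exactly the bookkeeping the paper leaves implicit.
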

\begin{proof}
By Lemma~\ref{interval}.
\end{proof}
Let now $g$ be another filtering function on $G$; in order to avoid confusion, for each real number $u$, we denote by $G_{f, u}$ (resp. $G_{g, u}$) the subgraph of $G$ induced by the edge set $f^{-1}\big((-\infty, u]\big)$ (resp. $g^{-1}\big((-\infty, u]\big)$).
\begin{lemma}\label{twofcts}
Assume that there exists a positive real $h$ such that $\sup_{e\in E}|f(e)-g(e)|\le h$. Assume also that $X \subseteq (V \cup E)$ exists, such that $u \in [u_1, v_1)$ is its $\mathcal{F}$-interval in $G, f)$, with $u_1+2h<v_1<+\infty$. Then there is a non-empty $\mathcal{F}$-interval $[u_2, v_2)$ of $X$ in $(G, g)$, and $|u_1-u_2|\le h, |v_1-v_2|\le h$.
\end{lemma}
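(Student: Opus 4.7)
The plan is to exploit the sandwich relation $G_{g,u}\subseteq G_{f,u+h}$ and $G_{f,u}\subseteq G_{g,u+h}$, which follows directly from the hypothesis $|f(e)-g(e)|\le h$ (any edge with $g(e)\le u$ satisfies $f(e)\le u+h$, and symmetrically). These two inclusions, combined with the two bullets of Def.~\ref{def:monotone}, will give all four bounds we need.

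First I would locate $u_2$. Since $X\subseteq V_{f,u_1}\cup E_{f,u_1}$ and every edge in $X$ has $g$-value at most $f$-value $+h\le u_1+h$, one has $X\subseteq V_{g,u_1+h}\cup E_{g,u_1+h}$, so the lowest value $u_2$ with $X\subseteq V_{g,u_2}\cup E_{g,u_2}$ satisfies $u_2\le u_1+h$. The symmetric argument, applied to $X\subseteq V_{g,u_2}\cup E_{g,u_2}$ via the reverse inclusion, gives $u_2\ge u_1-h$; hence $|u_1-u_2|\le h$. Next, to check that $\mathcal{F}(X)=true$ in $G_{g,u_2}$, I would note $G_{g,u_2}\subseteq G_{f,u_2+h}$ with $u_1\le u_2+h\le u_1+2h<v_1$, so $u_2+h$ lies in the $\mathcal{F}$-interval of $X$ in $(G,f)$ and $\mathcal{F}(X)=true$ in $G_{f,u_2+h}$; the first bullet of Def.~\ref{def:monotone} then transfers this to the subgraph $G_{g,u_2}$.

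For $v_2$ I would argue both directions analogously. For $v_2\le v_1+h$: if $\mathcal{F}(X)$ were $true$ in $G_{g,v_1+h}$, then because $G_{f,v_1}\subseteq G_{g,v_1+h}$ the first bullet of Def.~\ref{def:monotone} would force $\mathcal{F}(X)=true$ in $G_{f,v_1}$, contradicting the definition of $v_1$. For $v_2\ge v_1-h$: for every $v\in[u_2,v_1-h)$, one has $v+h\in[u_1,v_1)$ (using $u_2\ge u_1-h$) and $G_{g,v}\subseteq G_{f,v+h}$, and since $\mathcal{F}(X)=true$ in $G_{f,v+h}$ the same bullet gives $\mathcal{F}(X)=true$ in $G_{g,v}$. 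Hence $v_2\ge v_1-h$ and $|v_1-v_2|\le h$.

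Finally I would observe non-emptiness: the strict inequality $u_1+2h<v_1$ yields $u_2\le u_1+h<v_1-h\le v_2$, so $[u_2,v_2)\ne\emptyset$; by Lemma~\ref{interval} this is the $\mathcal{F}$-interval of $X$ in $(G,g)$. The main thing to be careful about is not confusing the two directions in the monotone bullet: Def.~\ref{def:monotone} only propagates $\mathcal{F}=true$ from a larger graph to a smaller one, so each step must be phrased as placing the graph of interest inside a graph where $\mathcal{F}(X)$ is already known to be $true$ (for the lower endpoints) or outside one where $\mathcal{F}(X)$ is known to be $false$, via the contrapositive (for the upper endpoints).
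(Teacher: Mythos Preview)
Your proposal is correct and follows essentially the same approach as the paper: establish the sandwich inclusions $G_{f,u}\subseteq G_{g,u+h}$ and $G_{g,u}\subseteq G_{f,u+h}$, then use them together with monotonicity (Def.~\ref{def:monotone}) to bound the two endpoints separately. If anything, your write-up is more complete than the paper's rather terse argument, since you explicitly verify that $\mathcal{F}(X)=true$ at $G_{g,u_2}$ and derive the non-emptiness of $[u_2,v_2)$ from $u_1+2h<v_1$, steps the paper leaves implicit.
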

\begin{proof}
Assume that, for $e\in E$, $f(e)=u$; then $g(e)\le u+h$. This proves that, for each $u$, $G_{f, u}$ is a subgraph of $G_{g, u+h}$. Swapping the roles, also $G_{g, u}$ is a subgraph of $G_{f, u+h}$.
\newline
Therefore, if $X$ exists in $G_{f, u}$ it also exists in $G_{g, u+h}$.Symmetrically, if $X$ exists in $G_{g, u}$ it also exists in $G_{f, u+h}$. Recalling, by Lemma~\ref{interval}, the meaning of $u_1$ and, correspondingly, $u_2$, we obtain that $|u_1-u_2|\le h$.
\newline
If $\mathcal{F}(X)=true$ in $G_{f, u+h}$, then $\mathcal{F}(X)=true$ also in the subgraph $G_{g, u}$ because $\mathcal{F}$ is monotone. Analogously, $\mathcal{F}(X)=true$ in $G_{g, u+h}$ implies $\mathcal{F}(X)=true$ in $G_{f, u}$. Recalling, by Lemma~\ref{interval}, the meaning of $v_1$ and, correspondingly, of $v_2$, we obtain that $|v_1-v_2|\le h$.
\end{proof}
\begin{proposition}\label{monbal}
The ip-function generators $\sigma^\mathcal{F} =\varrho^\mathcal{F}$ are balanced.
\end{proposition}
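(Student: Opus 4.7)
The plan is to invoke Proposition~\ref{same} and reduce the balancedness condition to a single graph $G=(V,E)$ carrying two filtering functions $f,g$ with $\sup_{e\in E}|f(e)-g(e)|\le h$. Under this reduction, it suffices to exhibit, for every $(u,v)\in\Delta^+$, an injection from the set $S^{\mathcal{F}}_{(G,f)}(u-h,v+h)$ into $S^{\mathcal{F}}_{(G,g)}(u,v)$, and the natural candidate is the identity map $X\mapsto X$. Once this is proved for $\sigma^{\mathcal{F}}$, Prop.~\ref{sigmaro} immediately transfers the conclusion to $\varrho^{\mathcal{F}}$.

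First, I would pick $X\in S^{\mathcal{F}}_{(G,f)}(u-h,v+h)$, so that $\mathcal{F}(X)=\mathit{true}$ in $G_{f,w}$ for every $w\in[u-h,v+h]$. By Lemma~\ref{interval}, the $\mathcal{F}$-interval of $X$ in $(G,f)$ is some $[u_1,v_1)$ with $u_1\le u-h$ and $v_1>v+h$. Since $u<v$, this forces $v_1-u_1>v-u+2h>2h$, hence the hypothesis $u_1+2h<v_1$ of Lemma~\ref{twofcts} is automatically satisfied whenever $v_1<+\infty$.

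Next, assuming $v_1<+\infty$, Lemma~\ref{twofcts} produces an $\mathcal{F}$-interval $[u_2,v_2)$ of $X$ in $(G,g)$ with $|u_1-u_2|\le h$ and $|v_1-v_2|\le h$. Then $u_2\le u_1+h\le u$ and $v_2\ge v_1-h>v$, so $[u,v]\subseteq[u_2,v_2)$ and $X\in S^{\mathcal{F}}_{(G,g)}(u,v)$, as desired. The case $v_1=+\infty$ is the one place where Lemma~\ref{twofcts} does not directly apply, but it is handled separately: $v_1=+\infty$ means $\mathcal{F}(X)=\mathit{true}$ in every sufficiently large $G_{f,w}$, and by the first monotonicity clause of Def.~\ref{def:monotone} this forces $\mathcal{F}(X)=\mathit{true}$ in every subgraph containing $X$, in particular in every $G_{g,w}$ with $X\subseteq V(G_{g,w})\cup E(G_{g,w})$; so $v_2=+\infty$ as well, while the argument bounding $u_2$ from above by $u_1+h\le u$ goes through exactly as in the proof of Lemma~\ref{twofcts}.

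I expect the only subtle point to be precisely this $v_1=+\infty$ case, since the existing lemma excludes it; the rest is just reading off the inclusions between sublevel subgraphs and combining them with the two monotonicity clauses. Once the inequality $\sigma^{\mathcal{F}}_{(G,f)}(u-h,v+h)\le\sigma^{\mathcal{F}}_{(G,g)}(u,v)$ is established, Prop.~\ref{same} yields balancedness of $\sigma^{\mathcal{F}}$, and Prop.~\ref{sigmaro} concludes the same for $\varrho^{\mathcal{F}}$.
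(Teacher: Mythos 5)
Your proof is correct and follows essentially the same route as the paper's: reduce via Prop.~\ref{same} to one graph with two filtering functions, and show that the identity map injects $S^{\mathcal{F}}_{(G,f)}(u-h,v+h)$ into $S^{\mathcal{F}}_{(G,g)}(u,v)$ by comparing $\mathcal{F}$-intervals through Lemma~\ref{interval} and Lemma~\ref{twofcts}, then transfer to $\varrho^{\mathcal{F}}$ by Prop.~\ref{sigmaro}. You are in fact slightly more careful than the paper, which invokes Lemma~\ref{twofcts} without addressing the case $v_1=+\infty$ excluded by that lemma's hypothesis; your separate treatment of that case via the first clause of Def.~\ref{def:monotone} closes that small gap.
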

\begin{proof}
We shall prove for $\sigma^\mathcal{F}$ (and consequently for $\varrho^\mathcal{F}$, by Prop.~\ref{sigmaro}) the property stated in Prop.~\ref{same}. With the notation and the assumptions of Lemma~\ref{interval}, assume that for $u<v$ we have $\sigma^\mathcal{F}_{(G, f)}(u-h, v+h)>0$ (if it vanishes the claim is trivially true). We want to show that $\sigma_{(G, f)}^\mathcal{F}(u-h, v+h)\le \sigma^\mathcal{F}_{(G, g)}(u, v)$.
\newline
Let $X \subseteq (V \cup E)$ be such that $\mathcal{F}(X)= true$ in $G_{f, w}$ for all $w \in [u-h, v+h]$. Then, for the $\mathcal{F}$-interval $[u_1, v_1)$ of $X$ in $(G, f)$ we have $u_1\le u-h$, $v+h<v_1$. The $\mathcal{F}$-interval of the same $X$ in $(G, g)$ is $[u_2, v_2)$, with $|u_1-u_2|\le h$, $|v_1-v_2|\le h$ by Lemma~\ref{twofcts}. So, $u_2\le u_1+h \le u-h+h = u$ and $v= v+h-h < v_1-h \ge v_2$, i.e. $[u, v]$ is contained in the $\mathcal{F}$-interval of $X$ in $(G, g)$ and $\mathcal{F}(X)= true$ in $G_{g, w}$ for all $w\in [u,v]$. Therefore, an injective map exists from $S^\mathcal{F}_{(G,f)}(u-h, v+h)$ to $S^\mathcal{F}_{(G, g)}(u, v)$, proving that $\sigma^\mathcal{F}_{(G, f)}(u-h, v+h) \le \sigma^\mathcal{F}_{(G, g)}(u, v)$.
\end{proof}
}

\begin{figure}[htb]
\centering
\includegraphics[width =0.75 \textwidth]{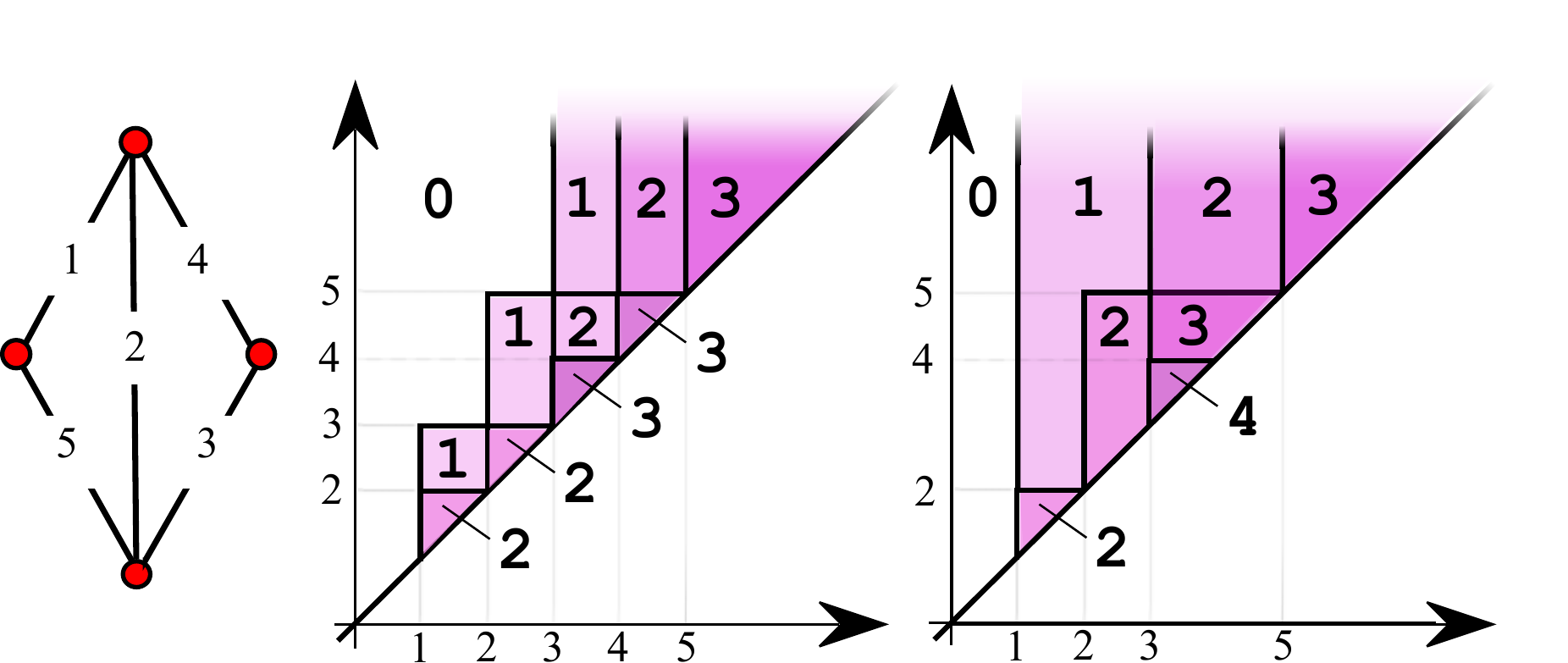}
\caption{A weighted graph $(G, f)$ (left) and its functions $\sigma^{m\mathcal{I}}_{(G, f)}$ (middle) and $\varrho^{m\mathcal{I}}_{(G, f)}$ (right).
}\label{fig:indep}
\end{figure}

\inserted{
Monotone features---although balanced---often give rise to extremely rich persistence diagrams. For this reason, it is possible to consider instead the maximal version (that could be non-balanced) of those features. In Fig.~\ref{fig:indep}, we show how maximal independent sets give rise to complex persistence diagrams, even considering as graph our running toy example (the one shown originally in Fig.~\ref{fig:theta}). For the monotone feature $\mathcal{I}$ which identifies independent sets of vertices, $m\mathcal{I}$ is not balanced (see the Appendix).}

\inserted{
Anyway, the maximal version of the feature $\mathcal{M}$, which identifies matchings, produces balanced ip-function generators (Prop.~\ref{match}). See Fig.~\ref{fig:match} for the functions $\sigma^{m\mathcal{M}}$ and $\varrho^{m\mathcal{M}}$ of the usual example of Fig.~\ref{fig:theta}.
}

\inserted{
\begin{proposition}\label{match}
The ip-function generators $\sigma^{m\mathcal{M}}$ and $\varrho^{m\mathcal{M}}$ coincide and are balanced.
\end{proposition}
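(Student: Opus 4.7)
My plan proceeds in three steps.

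First, I would prove an ``interval-stability'' property for maximal matchings: if $X$ is a maximal matching in both $G_{w_1}$ and $G_{w_2}$ for some $w_1\le w_2$, then $X$ is also a maximal matching in every intermediate $G_w$, $w\in[w_1,w_2]$. Indeed, $X\subseteq E_{w_1}\subseteq E_w$ keeps $X$ a matching in $G_w$, and any edge $e\in E_w\setminus X$ that could be added to $X$ would also lie in $E_{w_2}\setminus X$, contradicting maximality of $X$ there. This step cannot be reduced to Prop.~\ref{monbal}, because the downward-closed clause of Def.~\ref{def:monotone} fails for $m\mathcal{M}$: a proper subset of a maximal matching is generally not itself maximal.

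Second, the interval-stability property immediately yields $\sigma^{m\mathcal{M}}=\varrho^{m\mathcal{M}}$. One inclusion is Rem.~\ref{rem:implies}, and the reverse follows because any ranging maximal matching at $(u,v)$ is, by definition, maximal at some $w\le u$ and some $w'\ge v$, hence maximal throughout $[u,v]$ by step one.

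Third, to verify balancedness via Prop.~\ref{same}, I take filtering functions $f,g$ on $G$ with $\sup_{e\in E}|f(e)-g(e)|\le h$ and exhibit an injection $S^{m\mathcal{M}}_{(G,f)}(u-h,v+h)\hookrightarrow S^{m\mathcal{M}}_{(G,g)}(u,v)$ by the identity map $X\mapsto X$. Each $e\in X$ satisfies $f(e)\le u-h$ (since $X\subseteq E_{f,u-h}$), hence $g(e)\le u\le w$ for every $w\in[u,v]$, so $X\subseteq E_{g,w}$ and $X$ is still a matching there. Any $e'\in E_{g,w}\setminus X$ extending $X$ to a larger matching would satisfy $f(e')\le g(e')+h\le v+h$, contradicting maximality of $X$ in $G_{f,v+h}$. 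Thus $X\in S^{m\mathcal{M}}_{(G,g)}(u,v)$, and comparing cardinalities gives the desired inequality.

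The main obstacle I expect is resisting the temptation to invoke Prop.~\ref{sigmaro} and Prop.~\ref{monbal} directly: $m\mathcal{M}$ is \emph{not} monotone in the sense of Def.~\ref{def:monotone}, so both the equality and the stability arguments must be redone by hand. Fortunately, maximality is threatened only by adding edges from above while the matching condition propagates downward, and these two directions decouple cleanly along the filtration, which is exactly why the identity correspondence $X\mapsto X$ works.
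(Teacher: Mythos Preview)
Your proof is correct and complete. The route differs mildly from the paper's. The paper first characterises the $m\mathcal{M}$-interval of an edge set $X$ explicitly as $[u_2,v_2)$, where $u_2$ is the left end-point of the $\mathcal{M}$-interval of $X$ (i.e.\ the level at which $X$ first appears) and $v_2$ is the left end-point of the $\mathcal{M}$-interval of some strictly larger matching $Y\supset X$ (or $+\infty$); it then obtains balancedness by applying Lemma~\ref{twofcts} for the \emph{monotone} feature $\mathcal{M}$ separately to $X$ and to $Y$, bounding the shift of each end-point by $h$. You instead bypass the auxiliary object $Y$ and Lemma~\ref{twofcts} altogether, arguing directly at the level of individual edges: existence of $X$ is pushed forward from $G_{f,u-h}$ to $G_{g,u}$, and maximality is pulled back from $G_{f,v+h}$ to $G_{g,w}$. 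What the paper's approach buys is reuse of the machinery already set up for monotone features; what yours buys is a self-contained argument that makes transparent precisely which two facts are used (downward propagation of maximality, upward propagation of the matching property) and never requires naming a specific witness $Y$ to the loss of maximality.
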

\begin{proof}
If the edge set $X$ is a matching in a graph, it is a matching in all supergraphs. In a weighted graph $(G, f)$, the set of levels $w$ such that an edge set $X$ is a maximal matching in $G_w = (V_w, E_w)$ is either empty or the interval $[u_2, v_2)$ where $u_1$ is the left end-point of the $\mathcal{M}$-interval of $X$ and $v_2$ is either $+\infty$ or the left end-point of the $\mathcal{M}$-interval of a matching $Y$ containing $X$. This proves that $\sigma^{m\mathcal{M}}_{(G, f)} = \varrho^{m\mathcal{M}}_{(G, f)}$.
\newline
Let now  $g$ be another filtering function on $G$, such that $\sup_{e\in E}|f(e)-g(e)|\le h$, with $h>0$. Assume that the interval $[u_2, v_2)$ on which $X$ is a maximal matching is such that $u_2+2h < v_2 < +\infty$. Then, by Lemma~\ref{twofcts}, for the left end-point $u_3$ of the $\mathcal{M}$-interval of $X$ in $(G, g)$ and the left end-point $v_3$ of the $\mathcal{M}$-interval of $Y$ in $(G, g)$ one has $|u_2-u_3|\le h, |v_2-v_3|\le h$. So, if $X$ belongs to $S^{m\mathcal{M}}_{(G,f)}(u-h, v+h)$, it also belongs to $S^{m\mathcal{M}}_{(G, g)}(u, v)$, proving that $\sigma^{m\mathcal{M}}_{(G, f)}(u-h, v+h) \le \sigma^{m\mathcal{M}}_{(G, g)}(u, v)$.
\end{proof}
}

\begin{figure}[htb]
\centering
\includegraphics[width =0.45 \textwidth]{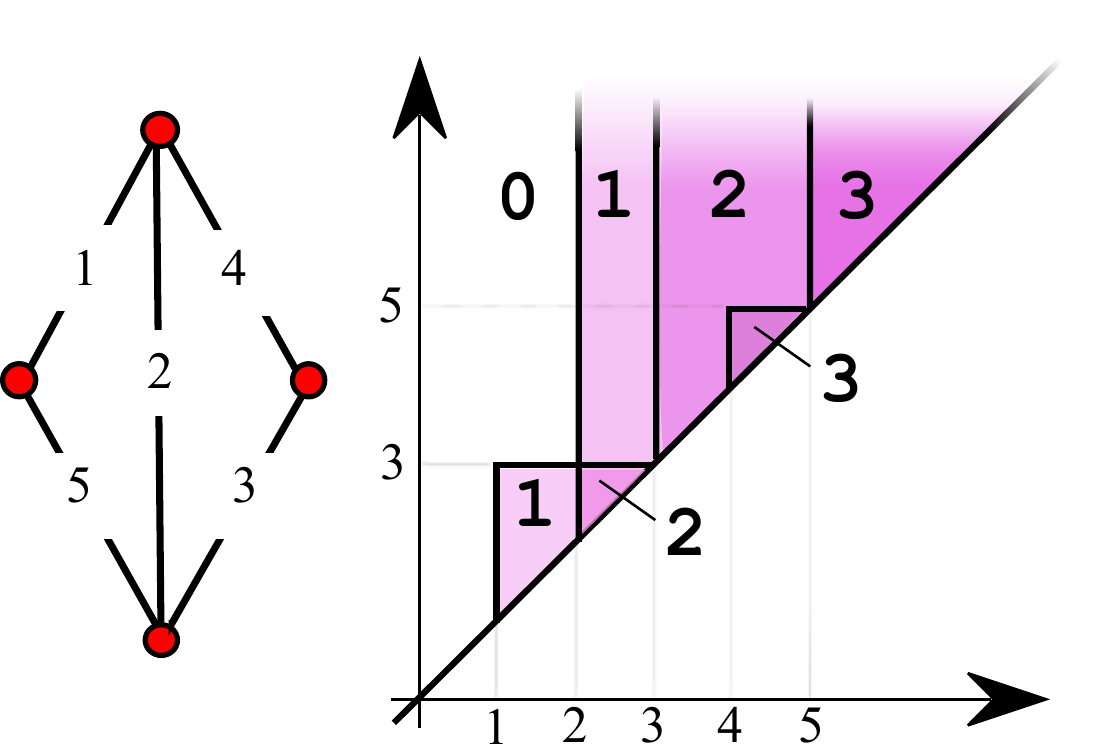}
\caption{A weighted graph $(G, f)$ (left) and its functions $\sigma^{m\mathcal{M}}_{(G, f)} = \varrho^{m\mathcal{M}}_{(G, f)}$ (right).
}\label{fig:match}
\end{figure}

\subsection{Hubs}\label{sec:hubs}

Although the informal concept of \textit{hub} is intuitively clear, it is not as easy to formalize in graph-theoretical terms. The simple idea of a vertex with (locally) maximum degree is not entirely satisfactory: in a social network it is common to find users with a lot of contacts, with whom, however, they interact poorly. Even a high sum of traffic intensities (e.g.~the number of messages exchanged between a user and their connections) is not enough to bestow a vertex the central role implied by the word \textit{hub}.

\modified{
There is an important line of research on a probabilistic concept of ``persistent hubs'' based on degree maximality \cite{dereich2009random,galashin2013existence,banerjee2021persistence} with some intersection with what we are proposing.
}

We shall use local degree prevalence as feature for building two ip-function generators: for any graph $G$ we define $\mathcal{H}:2^{V\cup E} \to \{true, false\}$ to yield $true$ only on singletons containing a vertex whose degree is greater than the ones of its neighbours. Such a vertex is called an \emph{$\mathcal{H}$-vertex} or simply a \emph{hub}.
This feature, combined with the indexing-aware persistence framework and the notion of ranging and steady feature, allows for the identification of those vertices whose role is indeed central throughout the filtration of a given weighted graph $(G, f)$.

Importantly, we preserve the flexibility granted in the realm of classical persistence: as one of the many possible variations, we could consider a vertex to be a hub if the sum of values of $f$ on the edges incident to it (instead of the degree) is greater then the sum at its neighbours.

Our proposal is to build persistence diagrams in our generalized framework, and thereafter use the selection procedure presented in~\cite{Ku16} (see \ref{sec:kurlin}) to identify relevant cornerpoints, thus identifying the ``persistent'' hubs \modified{(with a different meaning of the adjective than in  \cite{dereich2009random,galashin2013existence,banerjee2021persistence})}  of a given weighted graph.

\begin{definition}\label{sr-hub}
For any real number $w$, a vertex is a \emph{hub} (or \emph{$\mathcal{H}$-vertex}) \emph{at level} $w$ if it is an $\mathcal{H}$-vertex of the subgraph $G_w$.  It is a \emph{steady hub} (or \emph{s$\mathcal{H}$-vertex}) \emph{at} $(u, v)\in \Delta^+$ if it is an $\mathcal{H}$-vertex at all levels $w$ with $u\le w\le v$. It is a \emph{ranging hub} (or \emph{r$\mathcal{H}$-vertex}) \emph{at} $(u, v)\in \Delta^+$ if there exist levels $w \le u$ and $w'\ge v$ at which it is an $\mathcal{H}$-vertex.
\newline
$S^\mathcal{H}_{(G, f)}(u, v)$ and $R^\mathcal{H}_{(G, f)}(u, v)$ are respectively the sets of s$\mathcal{H}$-vertices and of r$\mathcal{H}$-vertices at $(u, v)$. We define $\sigma^\mathcal{H}_{(G, f)}: \Delta^+ \to \mathbb{R}$ as  the function which assigns to $(u, v) \in \Delta^+$ the number $|S^\mathcal{H}_{(G, f)}(u,v)|$ and $\varrho^\mathcal{H}_{(G, f)}: \Delta^+ \to \mathbb{R}$ as  the function which assigns to $(u, v) \in \Delta^+$ the number $|R^\mathcal{H}_{(G, f)}(u,v)|$.
\newline
We denote by  $\sigma^\mathcal{H}$ and $\varrho^\mathcal{H}$ the maps assigning $\sigma^\mathcal{H}_{(G, f)}$ and $\varrho^\mathcal{H}_{(G, f)}$ respectively to the $(\mathbb{R}, \le)$-indexed diagram corresponding to $(G, f)$.
\end{definition}

Fig.~\ref{fig:toyhub2} shows the two ip-functions $\sigma^\mathcal{H}$ and $\varrho^\mathcal{H}$ for the usual example of Fig.~\ref{fig:theta}.  Also $\sigma^\mathcal{H}$ and $\varrho^\mathcal{H}$ are not balanced (see the Appendix).

\begin{figure}[htb]
\centering
\includegraphics[width =0.7 \textwidth]{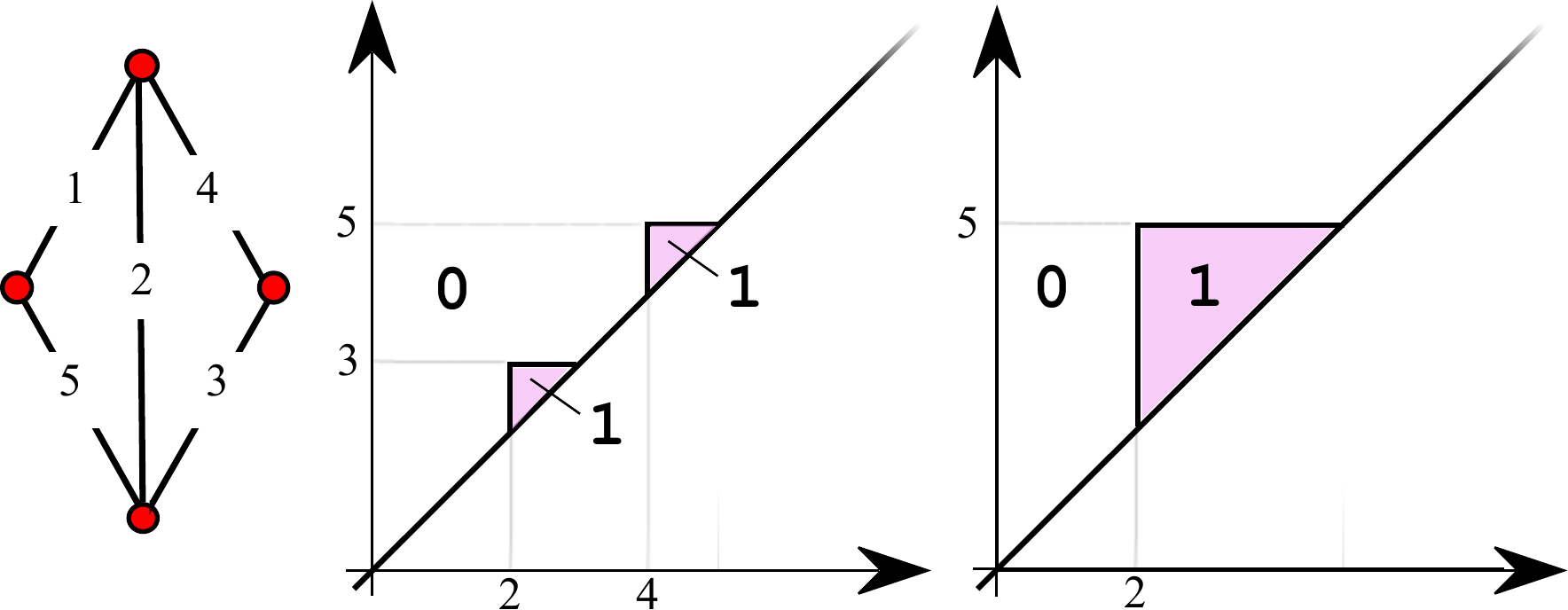}
\caption{A weighted graph $(G, f)$ (left) and its functions $\sigma^\mathcal{H}_{(G, f)}$ (middle) and $\varrho^\mathcal{H}_{(G,f)}$ (right). \modified{The topmost vertex is a hub at all levels in $[2, 3) \cup [4, 5)$.}
}\label{fig:toyhub2}
\end{figure}

\section{Persistent hubs}\label{sec:concrete}

In this Section we present a first approach to hub detection implementable on real-world graphs. We consider this work in progress a sort of exploration of the meaning of steady and ranging hubs in different contexts; however, we will not compare our results to a ground truth.

In the following examples,  instead of the functions $\sigma^\mathcal{H}_{(G, f)}$ and $\varrho^\mathcal{H}_{(G, f)}$, we will only show the corresponding persistence diagrams, to make the selection procedure clearer.

\subsection{A selection procedure}\label{sec:kurlin}

It is well-known in persistence that noise is represented by cornerpoints close to the diagonal $\Delta$. However, not all cornerpoints close to $\Delta$ necessarily represent noise, then how wide is the strip along $\Delta$ to get rid of? A smart, simple answer is offered in \cite{Ku16}, where a remarkable application to segmentation of very noisy data is given. We summarize it here for a given persistence diagram $D$.

Call {\it diagonal gap} a maximal region of the form \modified{$\{(u,v) \in \Delta^+ \, | \, a<v-a<b\}$} where no cornerpoints of $D$ lie; $b-a$ is its width. We can then form a hierarchy of diagonal gaps by decreasing width; out of it we get a hierarchy of sets of cornerpoints: We can consider the cornerpoints lying above the first, widest gap as the most relevant. Empirically, we may decide that also the cornerpoints sitting above the second, or the third widest gap are relevant, and so on. Equivalently, we consider the cornerpoints below the chosen gap to be ignored as a possible result of noise. In Fig.~\ref{fig:maxima_selection} it is possible to observe how the selection of cornerpoints above the widest diagonal gap allows to traceback those maxima (or classes of maxima depending on the multiplicity of the cornerpoints), that are more relevant with respect to the trend of the time series.

\begin{figure}[tbh]
\centering
\includegraphics[width=1\textwidth]{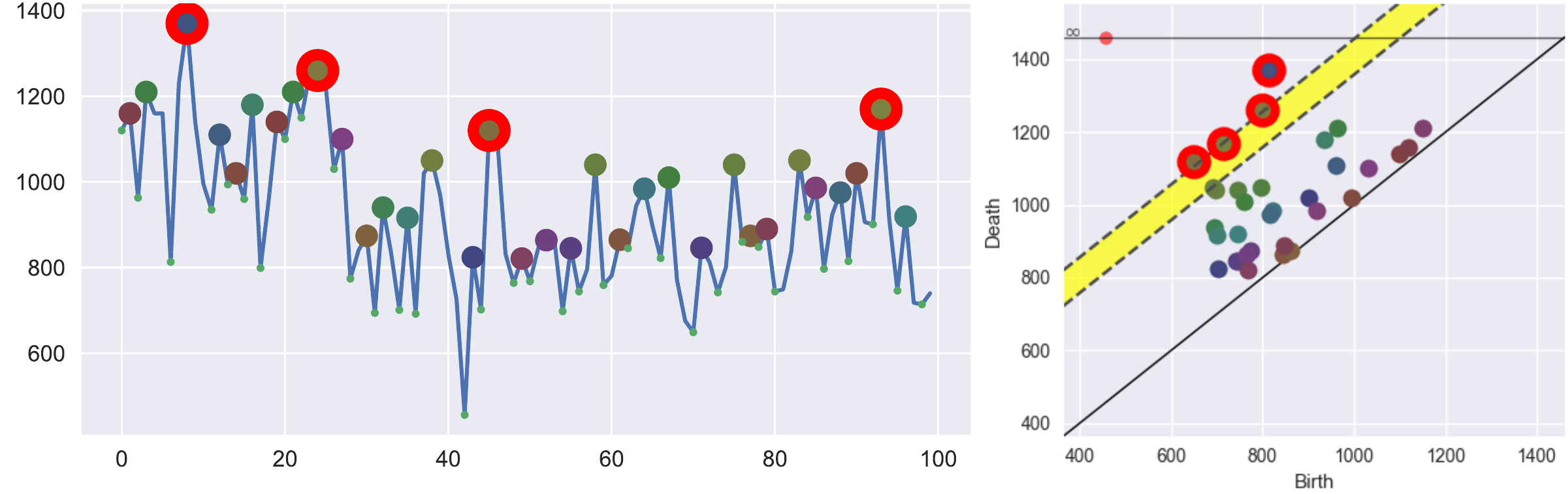}
\caption{Selecting maxima in a time series. \textbf{Left.} Flow of the Nile from $1871$ to $1970$. Data freely available at \href{https://vincentarelbundock.github.io/Rdatasets/datasets.html}{vincentarelbundock.github.io}. \textbf{Right.} Cornerpoints selected by considering the widest diagonal gap (in yellow).}\label{fig:maxima_selection}
\end{figure}

In the next Sections we apply this selection criterion to the persistence diagrams corresponding to the functions $\sigma^\mathcal{H}_{(G, f)}$ and $\varrho^\mathcal{H}_{(G, f)}$, computed for some networks and some filtering functions. The vertices identified by the so selected cornerpoints will be called {\it persistent hubs}, in particular {\it persistent steady hubs} or {\it persistent ranging hubs}.

\subsection{Airports}

A first attempt of the search for relevant hubs has been realized on a set of 44 major North-American cities (41 in the US, three in Canada; the ones in capital letters in the Amtrak railway map; see Table~\ref{tab:airports}). The edges connect cities between which there have been flights in a randomly chosen but fixed week (June 11 to 17, 2018). Flight data have been obtained from Google Flights by selecting direct flights with Business Class; distances have been found at \href{https://www.prokerala.com/}{Prokerala.com}. A single vertex has been considered for each city with more than one airport.

\begin{table}[tbh]
\begin{center}
\begin{tabular}{llllll}
\toprule
\multicolumn{4}{c}{\textbf{Vertices (degree)}}\\
\midrule
Albuquerque (13) & Atlanta (42) & Baltimore (16) & Boston (30) \\ Buffalo (8) &
Cheyenne (0) & Chicago (40)& Cincinnati (19) \\ Cleveland (13) & Dallas (41) &
Denver (39) & Detroit (35) \\ El Paso (7) & Houston (40) & Indianapolis (17) &
Jacksonville (12) \\ Kansas City (19) & Las Vegas (23) & Los Angeles (37) & Memphis (11) \\
Miami (30) & Milwaukee (14) & Mobile (3) & Montreal (16) \\ New Orleans (16)&
New York (35)& Oakland/Emeryville (7) & Philadelphia (34) \\ Phoenix (35) & Pittsburgh (14) & Portland (25) & Sacramento (16)\\ Salt Lake City (33) & San Antonio (17)& San Diego (26) & San Francisco (35)\\ Seattle (34) & St. Louis (17) & St. Paul-Minneapolis (38) & Tampa (19) \\ Toronto (26) & Tucson (10)& Vancouver (18) & Washington (32)\\
\bottomrule
\end{tabular}
\caption{The towns considered as vertices and the respective degrees in the graph.}\label{tab:airports}
\end{center}
\end{table}

As filtering functions we used:

\begin{itemize}

\item distance
\item number of flights in the fixed week
\item their product

\end{itemize}

\noindent and their opposites (+their maximum). For each such choice we looked for steady and ranging hubs, for a total of twelve different persistence diagrams. Note that the same vertex can contribute to several cornerpoints of the persistence diagram of $\sigma^\mathcal{H}_{(G, f)}$, whereas this cannot happen for $\varrho^\mathcal{H}_{(G, f)}$.

Next, we report results in which where the interest resides in the identification of hubs which do not rank very high by their degree. In particular, we do not find of particular interest that Atlanta, Dallas, Chicago and Houston turn out to be often persistent ranging or steady hubs, since they have the highest degrees in the graph (42, 41, 40 and 40 respectively).

\begin{figure}[tbh]
\centering
\includegraphics[width=65mm]{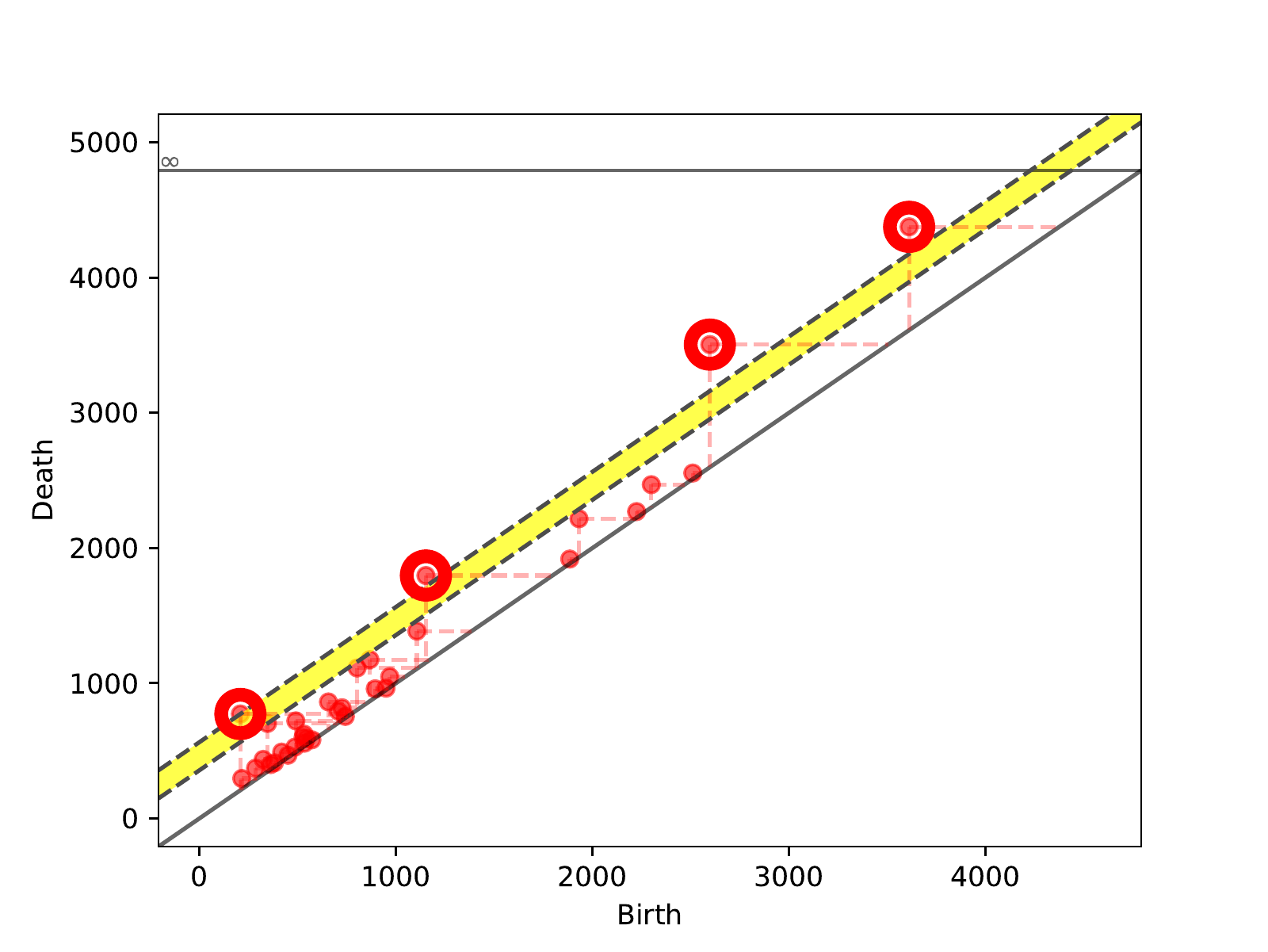}
\caption{Filtering function: distance; steady hubs. Persistent steady hubs above the widest diagonal gap: two cornerpoints represent Atlanta, one Dallas and one Seattle.}\label{fig:steadydist}
\end{figure}

The first occurrence of a persistent hub which is rather far from having highest degrees is with the filtering function distance: Seattle is just twelfth in the degree rank, but appears above the widest diagonal gap as a steady hub (Figure~\ref{fig:steadydist}). Persistent steady hubs are: Atlanta (with two cornerpoints), Dallas, Seattle.

Surprisingly, if we use the opposite of distance (summed to the maximum distance, for ease of representation), the cornerpoints corresponding to vertices with highest degrees are located under the widest diagonal gap  (Figure~\ref{fig:steadyminusdist}).
Persistent steady hubs are: Los Angeles, San Francisco, Seattle.

\begin{figure}[tbh]
\centering
\includegraphics[width=65mm]{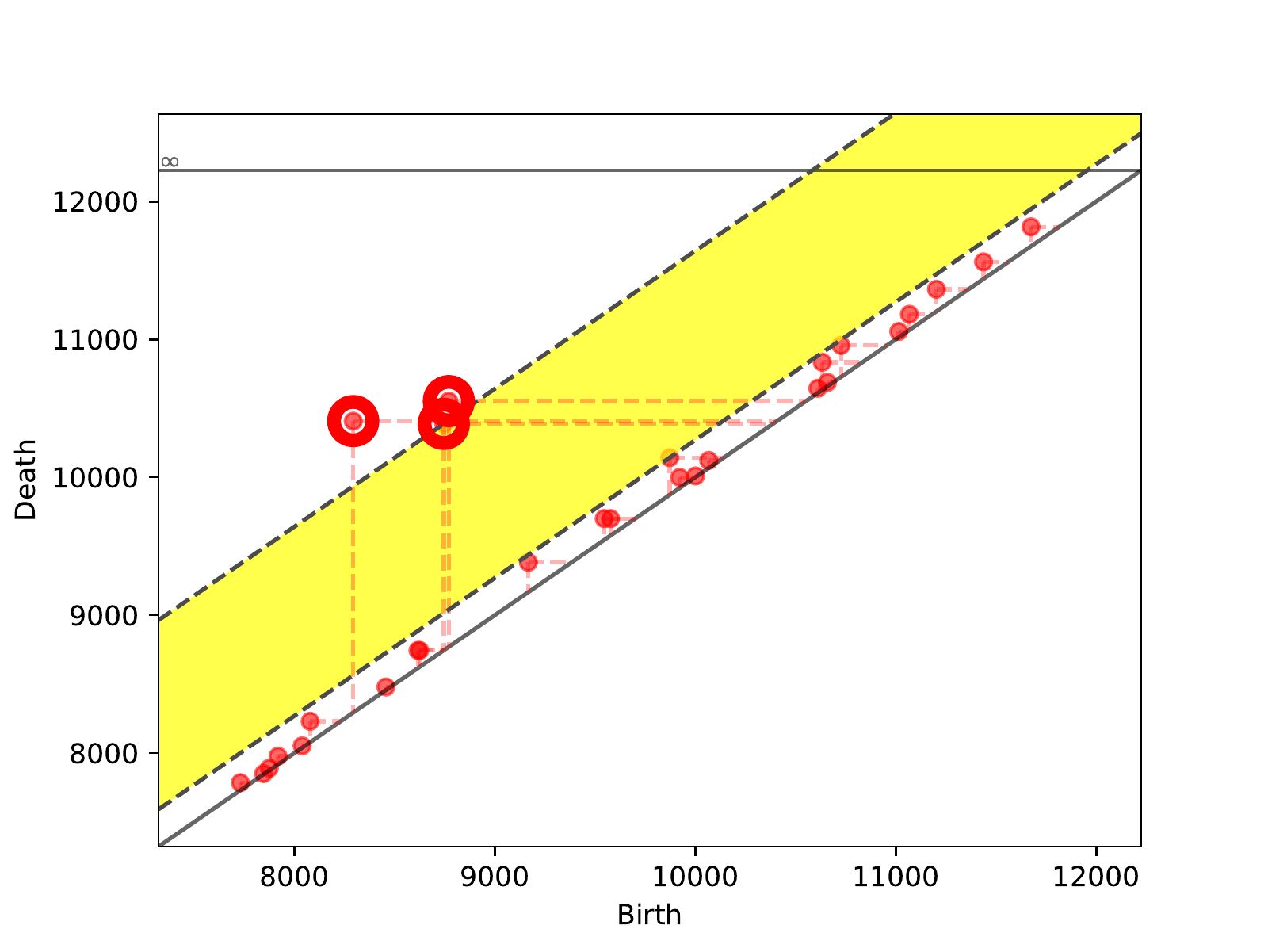}
\caption{Filtering function: max distance minus distance; steady hubs. Persistent steady hubs above the widest diagonal gap: Los Angeles, San Francisco, Seattle.}\label{fig:steadyminusdist}
\end{figure}

New York City has the eighth highest degree (35, together with Detroit, Phoenix and San Francisco). Still, we would expect it to appear as a hub, in the common sense of the term. In fact, it occurs as one of the few ranging hubs when the filtering functions (max minus number of flights) and distance$\cdot$(max minus number of flights) are used.
\newline
Ranging hubs for (max minus number of flights): Atlanta, Chicago, Dallas, New York.
\newline
Ranging hubs for the product filtering function are Atlanta, Chicago, Dallas, New York, Vancouver.

\subsection{Characters co-occurrence in a novel}

A classical benchmark for the analysis of hubs in co-occurrence graphs is given by {\it Les Mis\'erables}. The network representing the co-occurrence of its characters is freely available at \href{https://github.com/graphistry/pygraphistry/blob/master/demos/data/lesmiserables.csv}{Graphistry}. The graph has 77 major characters as vertices; each of the 254 edges joins two characters which appear together in at least one scene; the weight on an edge is the number of common occurrences. We used the inverse of the weight as a filtering function. We compare our results with the ones of \cite{rieck2018clique}, where the notion of {\it clique-community centrality} was used to spot particularly important characters: Table~\ref{tab:ccc}.

\begin{table}[tb]
\begin{center}
\begin{tabular}{l l l}

\multicolumn{3}{c}{\textbf{Steady hubs}}\\


Cosette & Courfeyrac & Enjolras \\
Marius & Myriel & Valjean  \\
\\
\multicolumn{3}{c}{\textbf{Ranging hubs}}\\


Cosette & Courfeyrac & Enjolras \\
Marius & Myriel & Valjean \\
\\
\multicolumn{3}{c}{\textbf{Clique-community centrality}}\\

Enjolras & Fantine & Gavroche \\
Marius & Valjean & \\

\end{tabular}
\caption{Hubs in Les Mis\'erables characters co-occurrence. Comparing results obtained via the steady and ranging persistence construction and clique-community centrality.} \label{tab:ccc}
\end{center}
\end{table}

Our method spots Cosette as a hub, whereas clique-community centrality does not. On the contrary, our technique misses Gavroche and Fantine. Both methods miss Javert. We are particularly puzzled by the result of Kurlin's selection method: above the second widest diagonal gap (the first obviously isolates Jean Valjean) we find only Enjolras.

\inserted{
\subsection{Time-varying hubs}
}
\inserted{Weighted graphs can represent discrete dynamics in time-varying process. It is possible to keep track of persistence hubs obtaining a concise representation of the relative importance of each hub in time. We considered the characters co-occurrence in five subsequent books of the Game of Thrones saga, and applied the algorithm mentioned above for the analysis of character co-occurrence in Les Mis\'erables. In this case, however, characters evolve throughout the books. A global analysis, i.e., computing hubs on the graphs obtained considering summary statistics on the five book hardly carries dynamical information. On the contrary, persistence hubs yield an easily visualizable summary of the characters' roles in time. See~Fig.~\ref{fig:got}}

\begin{figure}[tbh]
    \centering
    \includegraphics[width=1\textwidth]{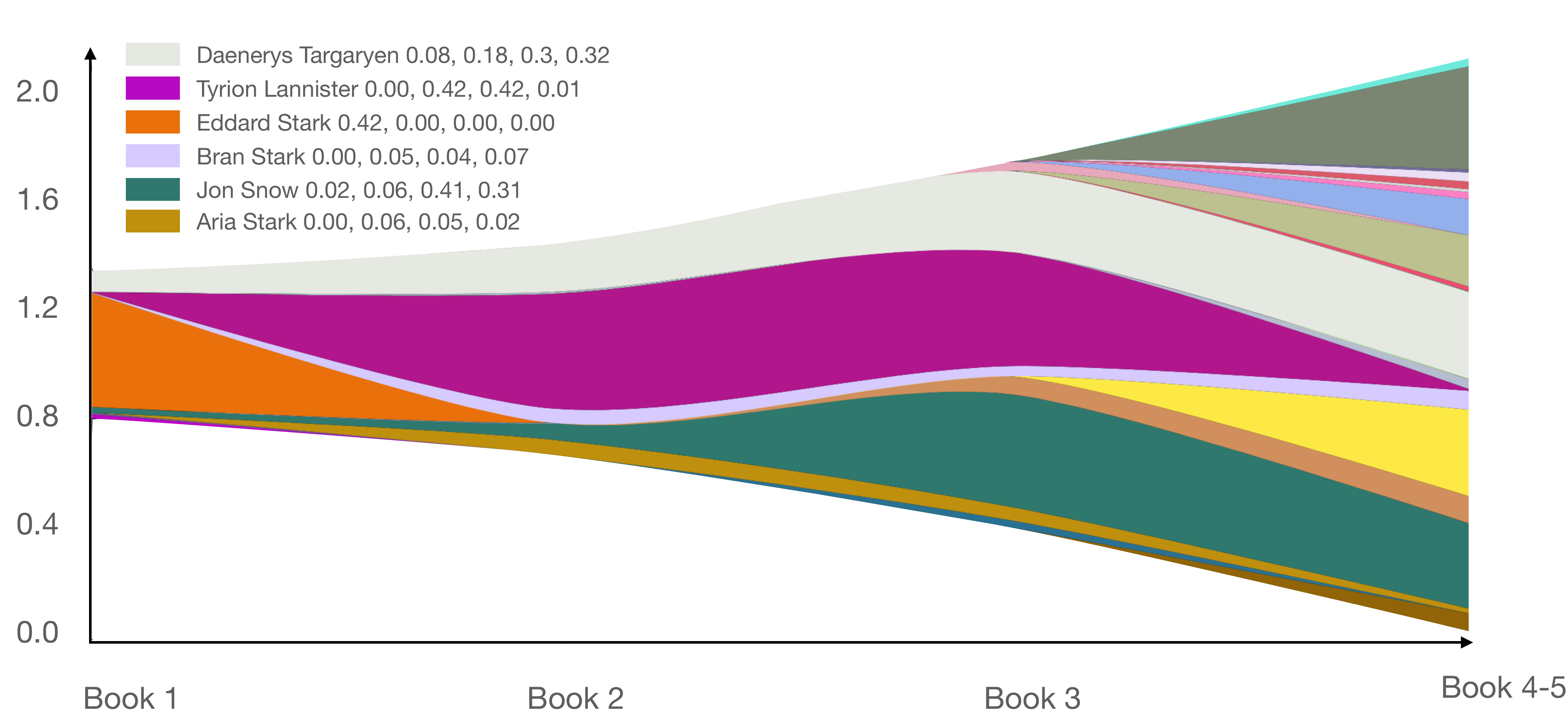}
    \caption{Evolution of Game of Thrones hub characters throughout five books. The legend reports the first six hubs and their persistence values per book.}\label{fig:got}
\end{figure}

\subsection{Languages}

The website \href{http://test.terraling.com/groups/7}{TerraLing.com} contains much information, consisting of 165 properties, about several languages. It was used in an interesting research \cite{port2018persistent} on persistent cycles in language families. Unfortunately the amount of information varies quite a lot from language to language. We analysed the mutual relations of 19 languages (18 of the European Union plus Turkish: Table~\ref{tab:ling}) for which at least 50\% of the 165 properties are checked. The graph is the complete one with 19 vertices. The filtering function defined on each edge is the opposite of the normalised quantity of common properties of the two languages that it connects. Ranging and steady hubs coincide and are: Castilian, Catalan, Dutch, English, Portuguese, Swedish.

\begin{table}[tbh]
\begin{center}
\begin{tabular}{lllll}
\hline
\multicolumn{5}{c}{\textbf{Languages}}\\
\hline
Castilian & Catalan & Czech & Croatian & Danish  \\
Dutch& English &Finnish & French & Galician  \\
German & Greek & Hungarian & Italian& Polish \\ Portuguese & Romanian &
Swedish  & Turkish  &  \\
\hline
\end{tabular}
\caption{The 19 considered languages.}\label{tab:ling}
\end{center}
\end{table}

Apart from the presence of English, which might also be biased by the great quantity of information available, we have no key for interpreting these results. For this and for the previous applications, we would very much like to set up a research with specific experts.

\inserted{
\section{Digraph persistence}\label{sec:digraph}
}
\inserted{
In this section, let $(G, f)$, with $G=(V, A)$, be any weighted digraph. Given a {\em feature} $\mathcal{F}: 2^{V\cup A} \to \{true, false\}$, it is straightforward to extend the definitions of {\em balanced} ip-function (Def.~\ref{def:balfct}), of {\em natural pseudodistance} (Def.~\ref{def:natural}), the stability theorem (Thm.~\ref{thm:stab}) and the definitions of {\em steady} and {\em ranging} sets (Def.~\ref{def:sr}) and of the ip-function generators $\sigma^\mathcal{F}$ and $\varrho^\mathcal{F}$ (Def.~\ref{def:rhosigma}, Prop.~\ref{supergp})  to this setting.
}

\begin{figure}
    \centering
    \begin{subfigure}[t]{\linewidth}
        \centering
        \includegraphics[width =0.75 \textwidth]{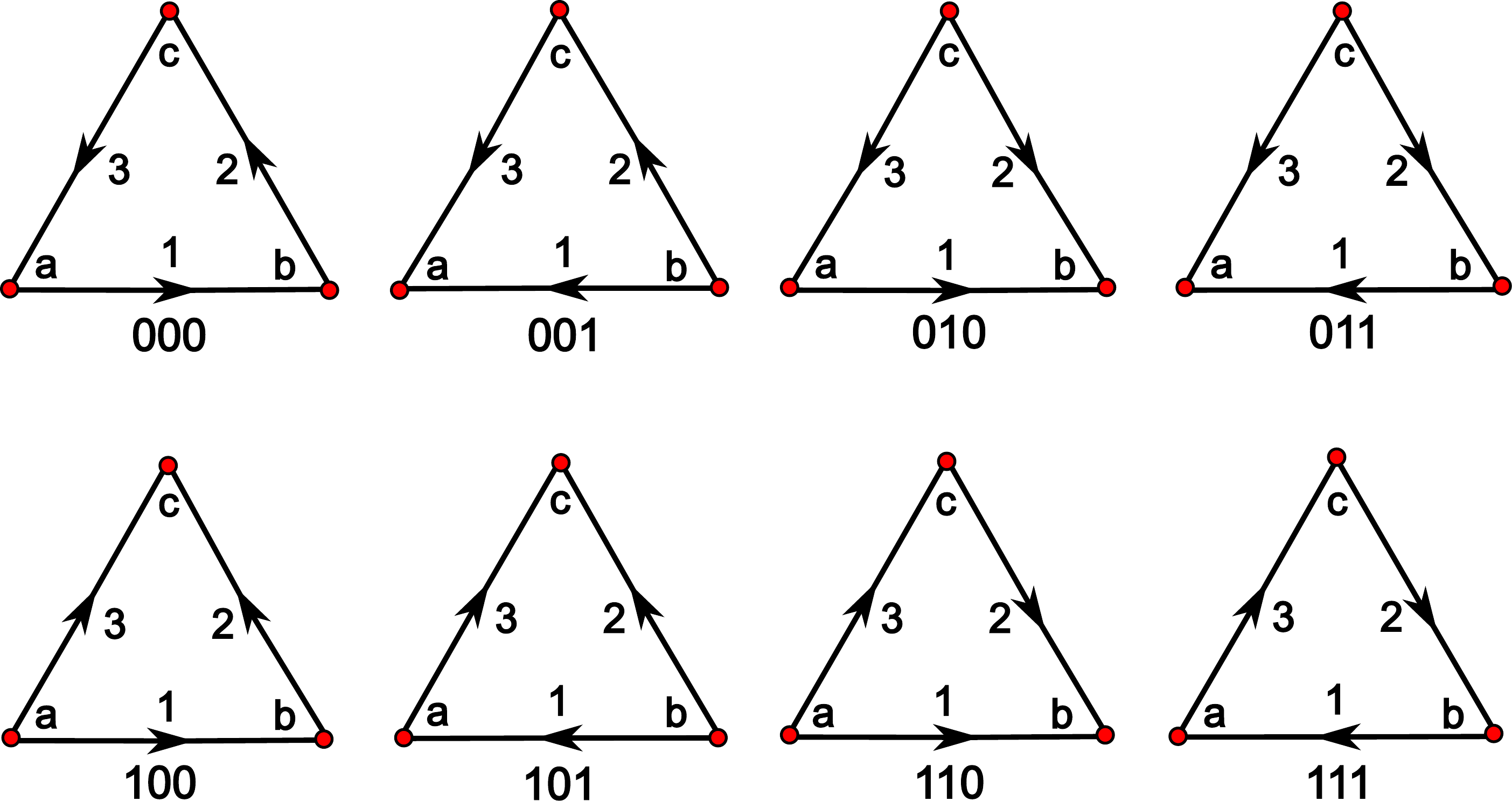}
        \caption{The eight tournaments on three vertices, with $\{1, 2, 3\}$-valued filtering functions.}\label{fig:triangles}
    \end{subfigure}

    \begin{subfigure}[t]{\linewidth}
        \centering
        \includegraphics[width =.75 \textwidth]{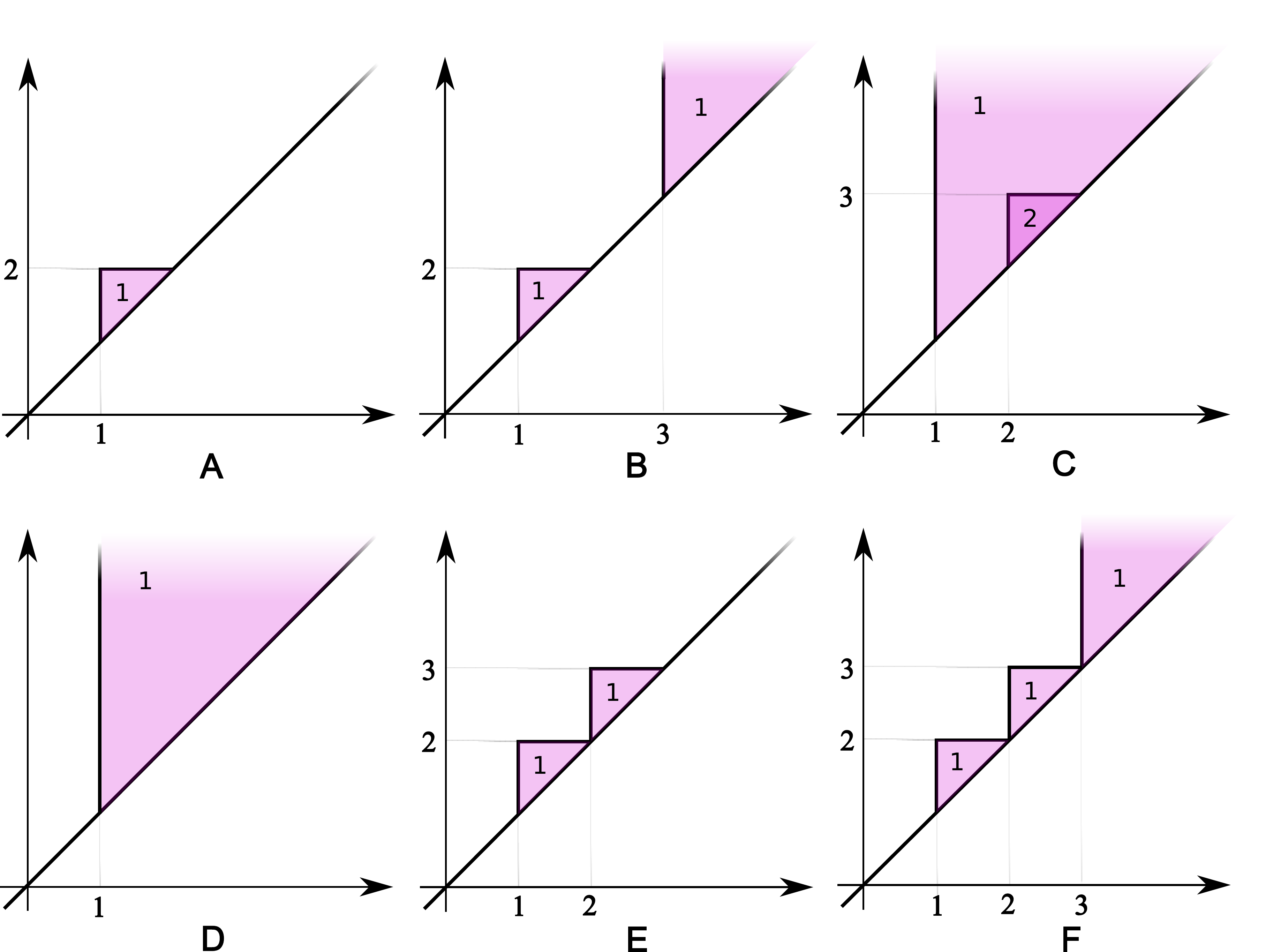}
        \caption{The ip-functions corresponding to the digraphs of Fig.~\ref{fig:triangles} with respect to features $\mathcal{DH}$ and $\mathcal{K}$.}\label{fig:diagrams}
    \end{subfigure}
    \caption{Examples of digraphs and ip-functions; for the correspondence see Table~\ref{tab:h} and Table~\ref{tab:k}.}
\end{figure}

\inserted{
We define $\mathcal{DH}: 2^{V\cup A} \to \{true, false\}$ to yield $true$ only on singletons containing a vertex whose outdegree is greater than the ones of its neighbours. Also in this case, there are many possible variations of this feature: we recover the notions of {\em hub}, {\em steady hub} and {\em ranging hub} and ip-function generators $\sigma^{\mathcal{DH}}$ and $\varrho^{\mathcal{DH}}$ as in Section~\ref{sec:hubs}.
}

\inserted{
Fig.~\ref{fig:triangles} presents all tournaments on three vertices, with injective functions with values in the set $\{1, 2, 3\}$. Fig.~\ref{fig:diagrams} shows the values of some ip-functions. The correspondence between weighted tournaments and functions is given in Table~\ref{tab:h}. On these digraphs, $\sigma^{\mathcal{DH}}$ and $\varrho^{\mathcal{DH}}$ yield coinciding functions. \inserted{However, this is not always the case, as shown in Fig.~\ref{fig:toydir}.}
\begin{table}[tbh]
\begin{center}
\begin{tabular}{|c||c|c|c|c|c|c|c|c|}
\hline
\multicolumn{9}{|c|}{\textbf{Hubs}}\\
\hline
&000 & 001 & 010 & 011 & 100 & 101 & 110 & 111 \\
\hline
$\sigma^{\mathcal{DH}}_{(G, f)} = \varrho^{\mathcal{DH}}_{(G, f)}$  & A & B & C & C & D & D & A & B \\
\hline
\end{tabular}
\caption{The correspondence between the weighted digraphs of Fig.~\ref{fig:triangles} and the diagrams of Fig.~\ref{fig:diagrams} for feature ${\mathcal{DH}}$.}\label{tab:h}
\end{center}
\end{table}\label{tab:h}
}

\begin{figure}[htb]
\centering
\includegraphics[width =0.6 \textwidth]{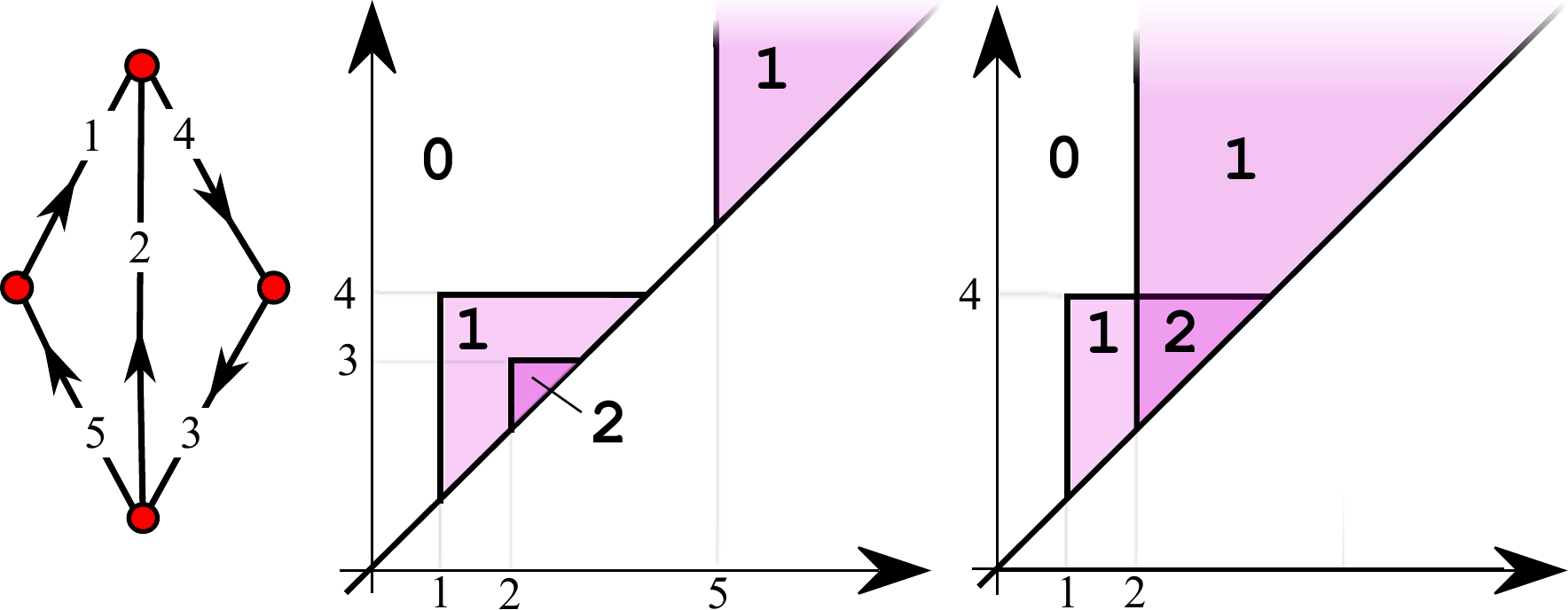}
\caption{A weighted digraph $(G, f)$ (left) and its functions $\sigma^{\mathcal{DH}}_{(G, f)}$ (middle) and $\varrho^{\mathcal{DH}}_{(G, f)}$ (right).
}\label{fig:toydir}
\end{figure}

\inserted{
There are two opposite definitions of a {\em kernel} of a digraph; we shall consider the one given in~\cite{morgenstern1953theory}. However, alternative definitions (see, e.g.,~\cite{galeana2014existence}) give also rise to admissible features in our framework. We define the feature $\mathcal{K}: 2^{V \cup A} \to \{true, false\}$ to yield $true$ only on kernels, i.e. independent sets $X$ of vertices such for every vertex $w \in V-X$, there exists at least one arc $a \in A$ with $w$ as tail and head in $X$, where independence is defined with respect to the underlying undirected graph. Then $\sigma^\mathcal{K}$ and $\varrho^\mathcal{K}$ are ip-function generators. The correspondence between weighted tournaments and functions is given in Table~\ref{tab:k}
\begin{table}[tbh]
\begin{center}
\begin{tabular}{|c||c|c|c|c|c|c|c|c|}
\hline
\multicolumn{9}{|c|}{\textbf{Kernels}}\\
\hline
&000 & 001 & 010 & 011 & 100 & 101 & 110 & 111 \\
\hline
$\sigma^\mathcal{K}_{(G, f)}$   & E & F & D & F & F & F & D & E \\
\hline
$\varrho^\mathcal{K}_{(G, f)}$   & E & C & D & C & F & F & D & E \\
\hline
\end{tabular}
\caption{The correspondence between the weighted digraphs of Fig.~\ref{fig:triangles} and the diagrams of Fig.~\ref{fig:diagrams} for feature $\mathcal{K}$.}\label{tab:k}
\end{center}
\end{table}\label{tab:k}
}

\inserted{None of the ip-function generators $\sigma^{\mathcal{DH}}$, $\varrho^{\mathcal{DH}}$, $\sigma^\mathcal{K}$, $\varrho^\mathcal{K}$ is balanced (see the Appendix).}

\section{Conclusions}

We  introduced ip-functions in a fairly general setting and studied their stability. We have then restricted our scope to the categories of graphs and digraphs, where we have defined steady and ranging sets according to features relative to the given (di)graphs. 

\modified{
    We showed how graph-theoretical features can be used directly to obtain a concise representation of weighted undirected and directed graphs as persistence diagrams. In particular, we believe that the steady and ranging ip-function generators allow for a more streamlined analysis of graphs and networks bypassing the construction of auxiliary simplicial complexes. Although the steady and ranging sets yield equivalent results in some cases, persistence diagrams associated with ranging sets are generally simpler than the ones derived from steady sets, so the information is represented in a more condensed way. This is not the only reason for considering both representations.}
\modified{
    In our applications, we focused on the notion of hub. There, we showcased how the ranging representation of hubs is relevant for hub detection: a vertex might be relevant for the global dynamics of a network if it has local degree prevalence at far enough levels. For example, in a graph whose vertices represent users of a social network, edges represent ``friendship'', and weights represent geographical distance, we conjecture that high-persistence ranging hubs might be crucial for the diffusion of ``viral'' documents. Analogously, we thought that an airport might have a key role if it has a sort of centrality both at a regional and international level, but not necessarily at all intermediate ones. 
}

\section*{Acknowledgments}
We are indebted to Diego Alberici, Emanuele Mingione, Pierluigi Contucci, Patrizio Frosini, Lorenzo Zuffi and above all Pietro Vertechi for many fruitful discussions. Article written within the activity of INdAM-GNSAGA. We thank the reviewers for the very helpful comments and suggestions. On behalf of all authors, the corresponding author states that there is no conflict of interest.

\begin{figure}[htb]
\centering
\includegraphics[width = 0.65\textwidth]{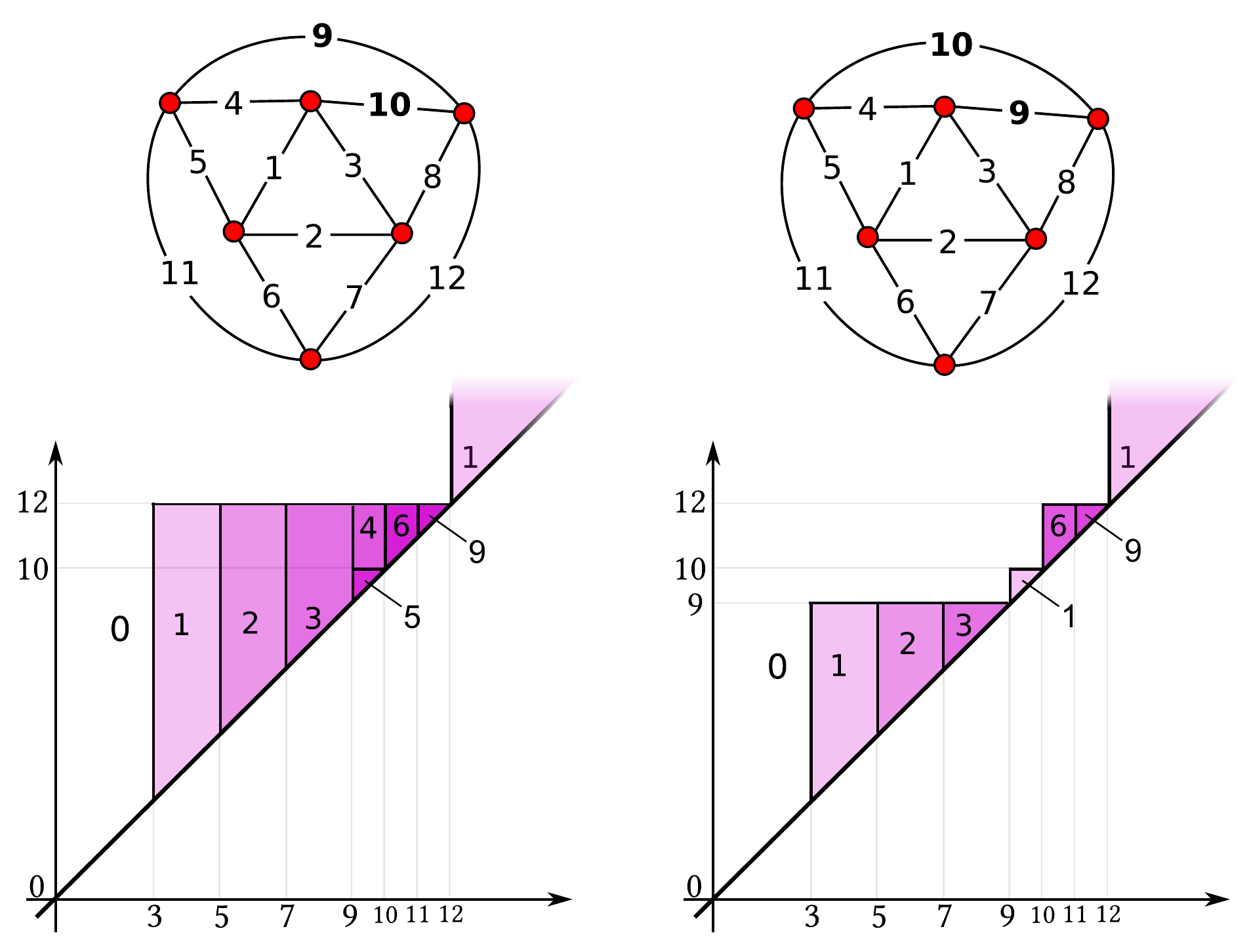}
\caption{$\sigma^\mathcal{EU}$ is not balanced: filtering function $f$ left, $g$ right.
}\label{fig:sunstable}
\end{figure}

\section*{Appendix: Unbalanced}

In order to show that some of the proposed ip-functions are not balanced---so their persistence diagrams do not generally enjoy stability---we give examples which do not respect Def.~\ref{def:balfct}.

The ip-function generator $\sigma^\mathcal{EU}$ is not balanced, as the example of Fig.\ref{fig:sunstable} shows: in fact, the maximum absolute value of the weight difference on the same edges is 1, and $\sigma^\mathcal{EU}_{(G, f)}(4.5-1, 10+1) = 1 > 0 = \sigma^\mathcal{EU}_{(G, g)}(4.5, 10)$.

\begin{figure}[htb]
\centering
\includegraphics[width = 0.65\textwidth]{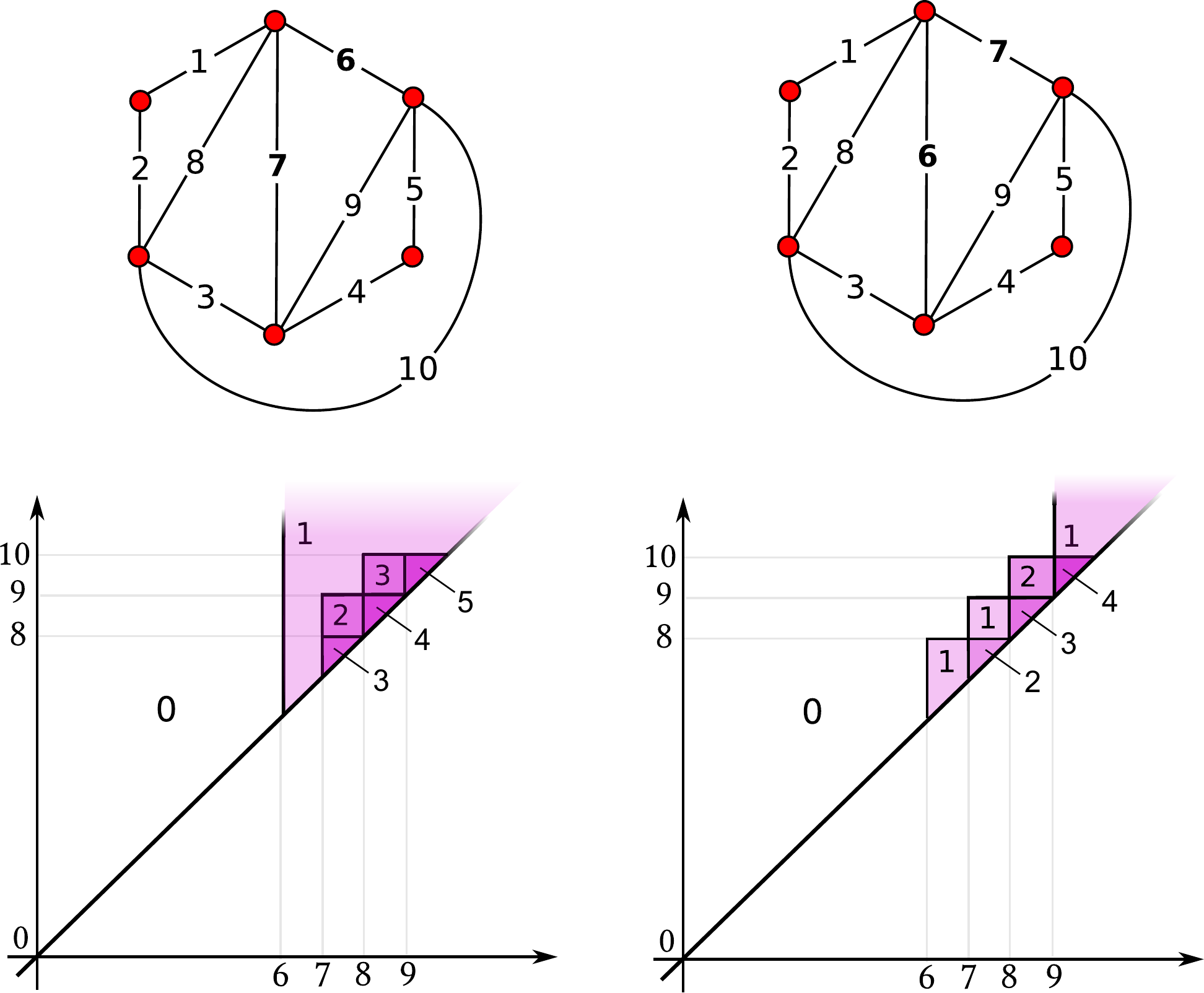}
\caption{$\varrho^\mathcal{EU}$ is not balanced: filtering function $f$ left, $g$ right.
}\label{fig:runstable}
\end{figure}

Also the ip-function generator $\varrho^\mathcal{EU}$ is not balanced, as the example of Fig.\ref{fig:runstable} shows: in fact, the maximum absolute value of the weight difference on the same edges is 1, and $\varrho^\mathcal{EU}_{(G, f)}(7.5-1, 10+1) = 1 > 0 = \varrho^\mathcal{EU}_{(G,g)}(7.5, 10)$.

\begin{figure}[htb]
\centering
\includegraphics[width = 0.35\textwidth]{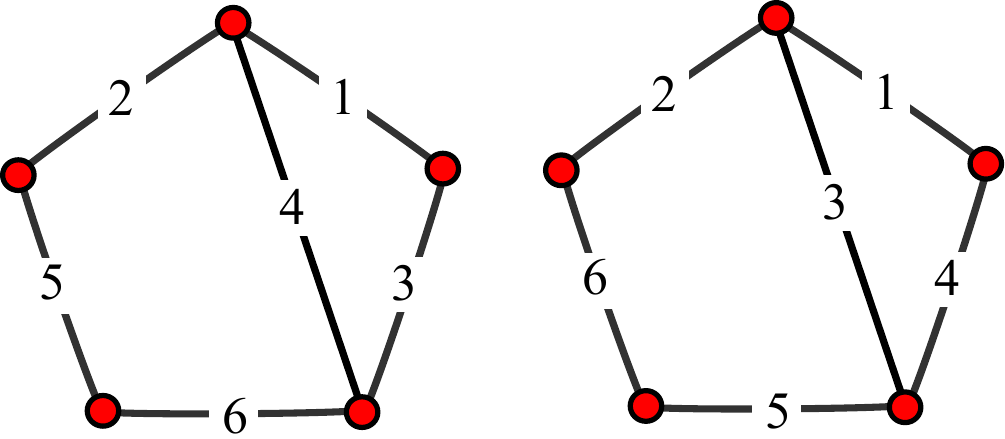}
\caption{$\sigma^{m\mathcal{I}}$ is not balanced: filtering function $f$ left, $g$ right.
}\label{fig:unst-ind-s}
\end{figure}

The ip-function generator $\sigma^{m\mathcal{I}}$ is not balanced: for the two filtering functions on the graph of Fig.~\ref{fig:unst-ind-s} the maximum difference in absolute value on the same edges is 1, but  $\varrho^{m\mathcal{I}}_{(G, f)}(3.5-1, 6+1) = 1 > 0 = \varrho^{m\mathcal{I}}_{(G,g)}(3.5, 6)$.

\begin{figure}[htb]
\centering
\includegraphics[width = 0.3\textwidth]{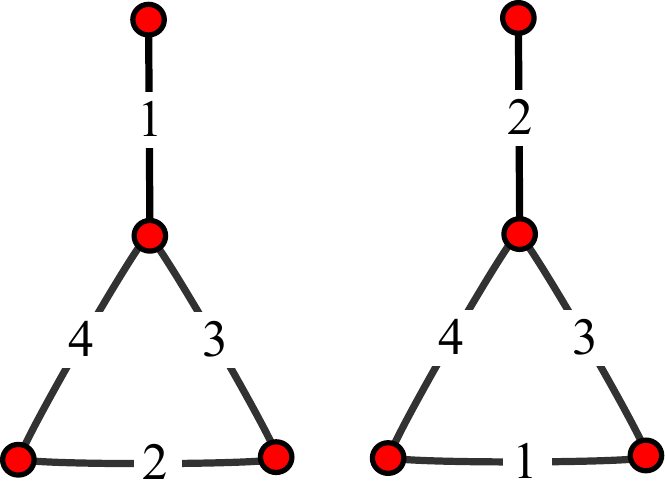}
\caption{$\varrho^{m\mathcal{I}}$ is not balanced: filtering function $f$ left, $g$ right.
}\label{fig:unst-ind-r}
\end{figure}

The ip-function generator $\varrho^{m\mathcal{I}}$ is not balanced: for the two filtering functions on the graph of Fig.~\ref{fig:unst-ind-r} the maximum difference in absolute value on the same edges is 1, but  $\varrho^{m\mathcal{I}}_{(G, f)}(3.5-1, 5+1) = 3 > 2 = \varrho^{m\mathcal{I}}_{(G,g)}(3.5, 5)$.

$\sigma^\mathcal{H}$ is not a balanced ip-function generator, as the example of Fig.~\ref{s-unstable} shows: the maximum absolute value of the weight difference on the same edges is 2, but $ \sigma^\mathcal{H}_{(G, f)}(4-2, 9+2) = 1 > 0 = \sigma^\mathcal{H}_{(G, g)}(4,9)$.

\begin{figure}[htb]
    \centering
    \begin{subfigure}[b]{.3\textwidth}
           \centering
           \includegraphics[width=\textwidth]{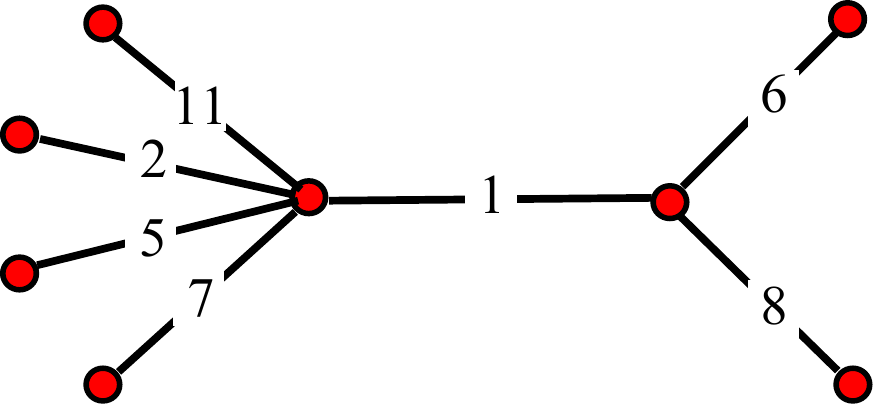}
    \end{subfigure}
    \begin{subfigure}[b]{.3\textwidth}
           \centering
           \includegraphics[width=\textwidth]{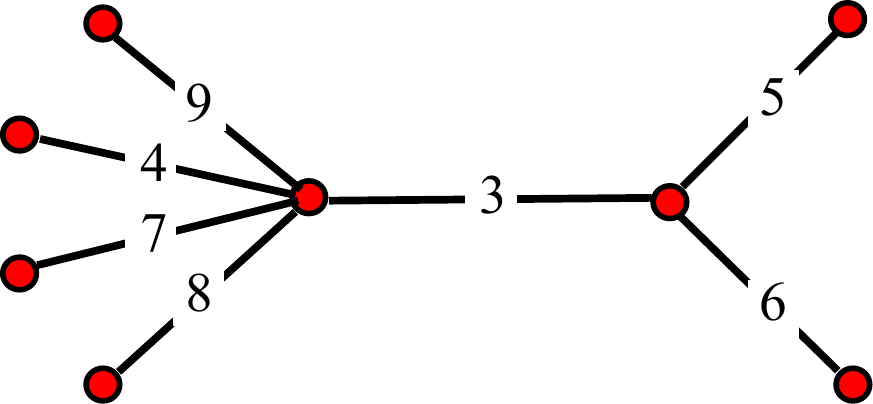}
   \end{subfigure}
    \caption{$\sigma^\mathcal{H}$ is not balanced: filtering function $f$ left, $g$ right.}\label{s-unstable}
    \end{figure}

There are counterexamples which are even simpler than this and the one of Fig.~\ref{r-unstable}. These have the advantage to hold also if ``$>$'' is substituted by``$\ge$'' in the definition of hub (what we don't think to be a good idea).

\begin{figure}[htb]
    \centering
    \begin{subfigure}[b]{.3\textwidth}
           \centering
           \includegraphics[width=\textwidth]{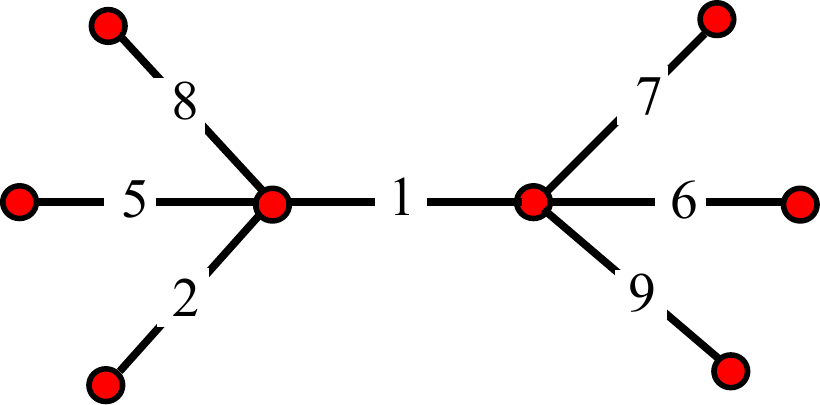}
              \end{subfigure}
    \begin{subfigure}[b]{.3\textwidth}
           \centering
           \includegraphics[width=\textwidth]{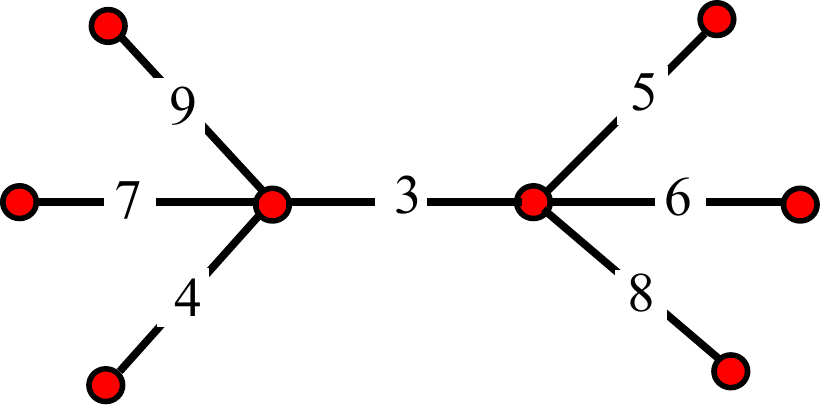}
    \end{subfigure}
    \caption{$\varrho^\mathcal{H}$ is not balanced: filtering function $f$ left, $g$ right.}\label{r-unstable}
    \end{figure}

Also $\varrho^\mathcal{H}$ is not a balanced ip-function generator, as the example of Fig.~\ref{r-unstable} shows:  the maximum absolute value of the weight difference on the same edges is 2, but  $ \varrho^\mathcal{H}_{(G, f)}(5-2, 6+2) = 1 > 0 = \varrho^\mathcal{H}_{(G, g)}(5,6)$.

\inserted{
In order to show that $\sigma^{\mathcal{DH}}$ and $\varrho^{\mathcal{DH}}$ are not balanced, consider the weighted tournaments 010 as $(G, f)$ and 011 as $(G', f')$. For the isomorphism $\psi$ which swaps vertices $a$ and $b$, one has $|f(e)-f'\big(\psi(e)\big)| \le 1$ for all $e \in A$, but $\sigma^{\mathcal{DH}}_{(G, f)}(2.5-1, 3+1) = \varrho^{\mathcal{DH}}_{(G, f)}(2.5-1, 3+1) = 1 > 0 = \varrho^{\mathcal{DH}}_{(G', f')}(2.5, 3) = \sigma^{\mathcal{DH}}_{(G', f')}(2.5, 3)$.
}

\inserted{
The ip-function generator $\sigma^\mathcal{K}$ is not balanced: consider the weighted tournaments 010 as $(G, f)$ and 011 as $(G', f')$. For the isomorphism $\psi$ which swaps vertices $a$ and $b$, one has $|f(e)-f'\big(\psi(e)\big)| \le 1$ for all $e \in A$, but $\sigma^{\mathcal{K}}_{(G, f)}(2.5-1, 4+1) = 1 > 0 = \sigma^{\mathcal{K}}_{(G', f')}(2.5, 4)$.
}

\inserted{
Finally, also $\varrho^\mathcal{K}$ is not a balanced ip-function generator: consider the weighted tournaments 001 as $(G, f)$ and 101 as $(G', f')$. For the isomorphism $\psi$ which swaps vertices $a$ and $c$, one has $|f(e)-f'\big(\psi(e)\big)| \le 1$ for all $e \in A$, but $\varrho^{\mathcal{K}}_{(G, f)}(2.5-1, 4+1) = 1 > 0 = \varrho^{\mathcal{K}}_{(G', f')}(2.5, 4)$.
}

\bibliographystyle{spmpsci}
\bibliography{FerriGraPer}
\end{document}